\documentclass[a4paper,twocolumn,10pt, accepted=2026-02-23]{quantumarticle}
\pdfoutput=1

\usepackage[numbers,sort&compress]{natbib}

\usepackage{graphicx}
\usepackage{bm}
\usepackage{xcolor}
\usepackage{physics}
\usepackage{verbatim}
\usepackage{blkarray}
\usepackage{amsmath}
\usepackage{amsthm}
\usepackage{amssymb}
\usepackage{makecell}

\usepackage{hyperref}

\usepackage{mathtools}
\usepackage{dsfont}

\newcommand\edit[1]{#1}

\newcommand{\Mat}{\operatorname{Mat}}

\newcommand{\Var}{\operatorname{Var}}
\newcommand{\opket}[1]{\vert #1 \rangle\!\rangle}
\newcommand{\opbra}[1]{\langle\!\langle  #1 \vert}
\newcommand{\opbraket}[2]{\langle\!\langle  #1 \vert #2 \rangle\!\rangle}
\newcommand{\opketbra}[2]{\vert  #1 \rangle\!\rangle \langle\!\langle#2 \vert}
\newcommand{\e}{\mathrm e}
\newcommand{\ii}{\mathrm i}
\newcommand{\SU}{\operatorname{SU}}
\newcommand{\SO}{\operatorname{SO}}

\newcommand{\col}{\operatorname{col}}
\newcommand{\wt}{\operatorname{wt}}
\newtheorem{prop}{Prop}
\newtheorem{lem}{Lemma}
\newtheorem{conj}{Conjecture}

\begin{document}

\title{Bridging Classical and Quantum Information Scrambling with the Operator Entanglement Spectrum}

\author{Ben T. McDonough}
\email{benjamin.mcdonough@colorado.edu}
\affiliation{Department of Physics and Center for Theory of Quantum Matter, University of Colorado, Boulder, CO 80309, USA}
\affiliation{Ames National Laboratory, Ames, IA 50011, USA}

\author{Claudio Chamon}
\email{chamon@purdue.edu}
\affiliation{Department of Physics and Astronomy, Purdue University, West Lafayette, IN 47907, USA}

\author{Justin H. Wilson}
\affiliation{Department of Physics and Astronomy, Louisiana State University, Baton Rouge, LA 70803, USA}
\affiliation{Center for Computation and Technology, Louisiana State University, Baton Rouge, LA 70803, USA}

\author{Thomas Iadecola}
\email{iadecola@psu.edu}
\affiliation{Department of Physics and Astronomy, Iowa State University, Ames, IA 50011, USA}
\affiliation{Ames National Laboratory, Ames, IA 50011, USA}
\affiliation{Department of Physics, The Pennsylvania State University, University Park, PA 16802, USA}
\affiliation{Institute for Computational and Data Sciences, The Pennsylvania State University, University Park, PA 16802, USA}
\affiliation{Materials Research Institute, The Pennsylvania State University, University Park, PA 16802, USA}

\date{July 18, 2025}

\begin{abstract}
Universal features of chaotic quantum dynamics underlie our understanding of thermalization in closed quantum systems and the complexity of quantum computations. 
Reversible automaton circuits, comprised of classical logic gates, have emerged as a tractable means to study such dynamics.
Despite generating no entanglement in the computational basis, these circuits nevertheless capture many features expected from fully quantum evolutions.
In this work, we demonstrate that the differences between automaton dynamics and fully quantum dynamics are revealed by the operator entanglement spectrum, much like the entanglement spectrum of a quantum state distinguishes between the dynamics of states under Clifford and Haar random circuits.
While the operator entanglement spectrum under random unitary dynamics is governed by the eigenvalue statistics of random Gaussian matrices, we show evidence that under random automaton dynamics it is described by the statistics of Bernoulli random matrices, whose entries are random variables taking values $0$ or $1$.
We study the crossover between automaton and generic unitary operator dynamics as the automaton circuit is doped with gates that introduce superpositions, namely Hadamard or $R_x = \e^{-\ii\frac{\pi}{4}X}$ gates.
We find that a constant number of superposition-generating gates is sufficient to drive the operator dynamics to the random-circuit universality class, similar to earlier results on Clifford circuits doped with $T$ gates. This establishes the operator entanglement spectrum as a useful tool for probing the chaoticity and universality class of quantum dynamics.
\end{abstract}

\maketitle

\section{\label{sec:sec1} Introduction}

The last decade has seen an intense effort to identify universal features of chaotic quantum dynamics. 
These unifying features describe commonalities in a wide range of physical systems, with applications to the quantum properties of black holes, the nature of thermalization in far-from-equilibrium systems, and the complexity of quantum computations~\cite{mi2021information, fisher2023random, cotler2017chaos}. 
The entanglement spectrum of a quantum state is an established tool for diagnosing complexity, thermalization, and universality in quantum dynamics \cite{geraedts2016many, yang2017entanglement, chamon2014emergent, chang2019evolution, shaffer2014irreversibility, rakovszky2019signatures, chen2018universal, geraedts2017characterizing}. It is known that the full state entanglement spectrum is a more fine-grained probe of the dynamics than the entanglement entropy alone \cite{zhou2020single}. One central result of this paper is to establish the entanglement spectrum of operators evolving in the Heisenberg picture as a fine-grained tool to probe quantum dynamics.

At the level of Heisenberg-picture dynamics, quantum chaos is often characterized by two features: operator spreading and operator scrambling \cite{mi2021information, nahum2018operator, hosur2016chaos, nahum2017quantum}. 
These features describe a process by which a Heisenberg-picture operator evolves to fill the lightcone and appear suitably random regardless of the initial condition.
One way to quantify operator scrambling is to study the growth of the operator entanglement entropy (OEE) under the Heisenberg dynamics \cite{wang2002quantum, zanardi2001entanglement,
Xu2019,styliaris2021information, xu2024scrambling, pivzorn2009operator, prosen2007chaos, prosen2007operator, dowling2023scrambling}.
The OEE is defined as the information entropy of a bipartitioning of the operator between subsystems $A$ and $B$, specifically the information shared between inputs-and-outputs of A and inputs-and-outputs of B (see Fig.~\ref{fig:OES_cartoon}).
The OEE saturates to a value extensive in system size, much like the entanglement entropy of a quantum state evolved under chaotic dynamics \cite{page1993average, sanchez1995simple}.
Moreover, the OEE can be used to characterize the complexity of a quantum evolution itself, without reference to a particular state or operator evolving under that dynamics.
In fact, the OEE for a given evolution up to time $t$ for a fixed bipartition is directly related to the value at time $t$ of the OTOC between two typical operators supported within $A$ and $B$, respectively~\cite{styliaris2021information}.

The OEE by itself is not fine-grained enough to characterize the dynamics of a generic operator as fully chaotic.
One example of this is dynamics under (reversible) automaton circuits, which are composed of classical logic gates.
These circuits act as permutation operators on the computational basis, and therefore generate no entanglement when acting on states in that basis.
However, they can generate maximal entanglement when evolving generic product states~\cite{iaconis2021quantum,pizzi2024bipartite}, and also produce signatures of chaotic operator dynamics.
For example, they exhibit ballistic operator spreading and power-law broadening of operator fronts, volume-law operator entanglement \cite{iaconis2019anomalous, pizzi2024bipartite}, and chaos-like behavior of certain OTOCs \cite{chamon2022quantum, bertini2024quantum}. 
These resemblances have established quantum automaton circuits as a computationally tractable model to study chaotic dynamics \cite{iaconis2019anomalous, iaconis2021quantum}, and also to establish parallels between classical and quantum information chaos~\cite{chamon2022quantum, pizzi2022bridging}.
At the same time, automaton circuits are clearly not equivalent to generic quantum circuits, given their non-universality.
How does this difference manifest itself in the dynamics of generic operators?

\begin{table}[t!]
    \centering
    \begin{tabular}{lll}
    \hline \hline
         Evolution & Clifford & Automata \\
         \hline
         Code states & Entangling & Non-entangling \\
         Pauli-strings & Non-entangling & Entangling\\
        \hline\hline 
    \end{tabular}
    \caption{A comparison of the (Schr\"odinger) evolution of code states vs.\ (Heisenberg) evolution of generic Pauli string operators with both Clifford dynamics and Automata dynamics.}
    \label{tab:Clifford_Automata}
\end{table}
To answer this question, it is useful to consider an analogy between automaton circuits and Clifford circuits (partially summarized in Table~\ref{tab:Clifford_Automata}). 
Where automaton circuits generate no entanglement when acting on states in the computational basis, Clifford circuits generate no operator entanglement when acting on operators in the Pauli basis.
Like automaton circuits can generate extensive operator entanglement when acting on local operators, Clifford circuits can generate extensive entanglement when acting on product states.
A useful way to detect the nonuniversality of Clifford circuits at the level of quantum states is to consider the full entanglement spectrum (ES), rather than just the entanglement entropy.
For instance, the ES of an initial random product state evolved under a random Clifford circuit exhibits Poisson statistics, while a state evolved under a Haar random unitary circuit exhibits Wigner-Dyson ES statistics~\cite{chamon2014emergent}.
This property is fine-tuned: inserting a constant number of $T$ gates is sufficient to drive the ES statistics from Poisson to Wigner-Dyson~\cite{zhou2020single,haferkamp2020quantum}, even though the operator dynamics in the Pauli basis remains relatively tractable.

In this work, we complete this analogy by studying the properties of the operator entanglement spectrum (OES), i.e.~the full distribution of the operator Schmidt coefficients, rather than just the OEE. 
While the fine-grained features of the state entanglement spectrum are well-studied, characterizing dynamics using the OES of an initially local evolved operator is less explored. 
Previous work~\cite{chen2018operator} observed that features of the OES, such as the spectral form factor, qualitatively agree with the predictions of random matrix theory (RMT).
We further elucidate this connection. 
We argue that under random unitary evolution, the operator entanglement spectrum (OES) is a universal measure of the fluctuation of operator amplitudes in Pauli-string space which is insensitive to the details of the dynamics and the initial conditions. 
In the same spirit as the famous Wigner surmise \cite{wigner1951statistical}, we use this observation to derive approximate expressions for the entanglement spectra and spacing ratio distributions of local, Hermitian observables evolved under random unitary and time-reversal (TR) invariant dynamics.
This establishes the universal OES as a direct measure of ergodicity in operator space.

We then turn to consider the entanglement spectrum of automaton circuits and of local operators evolved under random automaton dynamics.
We relate the entanglement spectrum of such an operator to the singular value distribution of a Bernoulli matrix with entries sampled from $\{0,1\}$ with $p(1) = \frac{1}{d}$, where $d$ is the Hilbert space dimension. 
We argue that just as in the unitary case, the OES behaves as though the operator amplitudes are fluctuating randomly with respect to this measure. 
Thus, we show that the OES effectively discriminates between automaton dynamics, a model of classical information chaos, and random unitary dynamics, a model of quantum information chaos.

Like Clifford circuits become computationally universal with the addition of $T$ gates, automaton circuits become universal with the addition of Hadamard gates~\cite{aharonov2003simple}. 
This provides an appealing picture of universal quantum logic as universal classical logic augmented with superposition. 
Motivated by the analogy between state dynamics in Clifford circuits and operator dynamics in automata, we study the effect of adding superposition-generating (SG) gates on the OES.
We show numerically that the universal OES under unitary dynamics can be approximated exponentially closely with a finite number of SG gates, independent of system size. 
The transition we observe the in level spacing ratios of the OES when SG gates are added is even more dramatic, and matches the behavior of the state ES level spacings under Clifford circuits: any finite number of SG gates causes the OES level spacings to converge exponentially quickly in systems size to the distribution given by Haar-random unitary operators.
This creates a bridge between classical and quantum information chaos.
\par The remainder of the paper is organized as follows. In Sec.~\ref{sec:sec2}, we show that the Marchenko-Pastur distribution is the universal OES associated to local operators evolving under random unitary and orthogonal dynamics and explore the conditions for its emergence. We derive expressions for the level spacing ratios of these spectra \cite{atas2013distribution}. We analyze the evolution of the OES under random unitary circuits, which is shown to exhibit scale-invariant behavior during intermediate times and equilibrate to the universal distribution at late times. In Sec.~\ref{sec:sec3}, we carry out a similar study of the entanglement spectrum of off-diagonal operators evolving under random automaton dynamics. We relate this spectrum to the singular value distribution of Bernoulli matrices, which is quite distinct from that of random unitary matrices. We study the properties of this distribution and derive a method for computing its moments. We discover that the convergence properties of the entanglement spectrum dynamics under random local automata are qualitatively similar to the unitary case with respect to this alternative spectral distribution. Lastly, in Sec.~\ref{sec:sec4}, we study the crossover between random unitary and random automaton dynamics as automaton circuits are doped with SG gates. We find that the OES and level spacings converge with increasing density of SG gates and system size, completing the analogy to Clifford circuits and the bridge between classical and quantum information chaos.

\subsection{Mathematical Preliminaries}
\label{sec:Mathematical Preliminaries}

Let $\mathcal H_d = \mathcal H_A \otimes \mathcal H_B$ be a Hilbert space of dimension $d$, bipartitioned into subsystems $A$ and $B$ of dimensions $d_A$ and $d_B$, respectively ($d = d_A d_B$). 
The entanglement spectrum $\{\lambda_i\}$ of an operator $X$ on $\mathcal H_d$ is defined in terms of the operator Schmidt decomposition
\begin{align}
\label{eq:schmidt}
X = \sum_i \sqrt{\lambda_i}\ \mathcal U_i \otimes \mathcal V_i,
\end{align}
where $\{\mathcal U_i\}$, $\{\mathcal V_i\}$ are two sets of trace-orthonormal operators supported on $A$ and $B$ respectively (see Fig.~\ref{fig:OES_cartoon}).
The OES of $X=\sum_{\alpha,\beta}X_{\alpha\beta}\ket{\alpha}\bra{\beta}$ is equivalent to the entanglement spectrum of the vectorized operator
\begin{align}
\label{eq:vector-X}
    \opket{X} = \sum_{\alpha,\beta} X_{\alpha\beta}\ket{\alpha}\ket{\beta}^*\equiv\sum_{\alpha,\beta} X_{\alpha\beta}\opket{\alpha\beta},
\end{align}
viewed as a state in the enlarged Hilbert space $\mathcal H_d^{\otimes 2}$.
(Here $\{\ket{\alpha}\}$ is a product-state basis for $\mathcal H_d$.) This Hilbert-space isomorphism mapping operators $X$ to kets $\opket{X}$ is isometric with the trace inner product $\opbraket{X}{X} = \Tr(X^\dagger X)$.
With respect to the bipartition $\mathcal H_d = \mathcal H_A \otimes \mathcal H_B$, Eq.~\eqref{eq:vector-X} can be rewritten as
\begin{align}
\label{eq:vector-X-bipartition}
\opket{X}=\sum_{\alpha_A,\beta_A,\alpha_B,\beta_B}X_{(\alpha_A\alpha_B)(\beta_A\beta_B)}\opket{\alpha_A\beta_A}\otimes\opket{\alpha_B\beta_B},
\end{align}
where we have defined the indices $\alpha_A,\alpha_B$ and $\beta_A,\beta_B$ via $\ket\alpha = \ket{\alpha_A}\otimes\ket{\alpha_B}$ and $\ket{\beta}=\ket{\beta_A}\otimes\ket{\beta_B}$, so that $\opket{\alpha_A\beta_A}\otimes\opket{\alpha_B\beta_B}\equiv (\ket{\alpha_A}\ket{\beta_A}^*)\otimes(\ket{\alpha_B}\ket{\beta_B}^*)$. 
The matrix $X_{(\alpha_A\alpha_B)(\beta_A\beta_B)}$ is identical to $X_{\alpha\beta}$ with $\alpha=(\alpha_A\alpha_B)$ and $\beta=(\beta_A\beta_B)$ treated as ``superindices." 
The OES is then given by the squared singular values of the matrix $X_{(\alpha_A\beta_A)(\alpha_B\beta_B)}$ obtained via the partial transpose $\alpha_B\leftrightarrow\beta_A$ of $X_{\alpha\beta}$, which we denote by $X^{t_{AB}}$ (note that $X^{t_{AB}}$ is a $d_A^2\times d_B^2$ matrix).
Equivalently, the OES are the eigenvalues of the \textit{reduced superoperator} 
\begin{align}
\label{eq:reduced-super}
    \tr_{\mathcal B(H_B)}\opketbra{X}{X} = X^{t_{AB}}(X^{t_{AB}})^\dagger,
\end{align} 
where the trace is taken over $\mathcal B(H_B)$, the Hilbert space of operators on $B$. 
\begin{figure}
    \centering
    \def\svgwidth{.85\linewidth}
    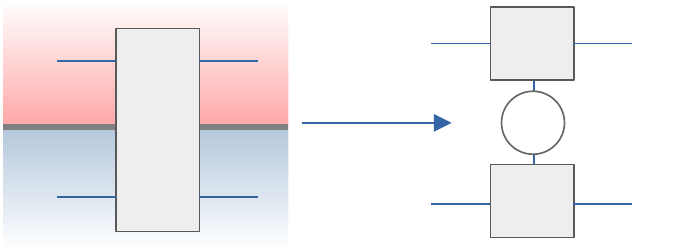
    \caption{A cartoon demonstrating the operator entanglement spectrum for an operator $X$ (rectangular box) and a bipartition into regions $A$ (red) and $B$ (blue). Lines on the left and right sides of $X$ represent registers storing the input and output states, respectively, in the Hilbert spaces $\mathcal H_A$ and $\mathcal H_B$. Performing an appropriate singular value decomposition (SVD) results in the Schmidt decomposition of $X$ as a sum over products of operators acting in subregions $A$ and $B$, Eq.~\eqref{eq:schmidt}, as shown in the diagram on the right.}
    \label{fig:OES_cartoon}
\end{figure}

Because the singular values are unique, and because the largest $k$ singular values provide the best approximation to $X$ by a sum of at most $k$ product operators on subsystems $A$ and $B$, the OES provides the most direct way to quantify the complexity with which the action of an operator on subsystem $A$ depends on subsystem $B$, or the randomness in an observable quantity when information about subsystem $B$ is lacking. 
The most commonly studied metric obtained from the OES is the operator entanglement entropy (OEE), defined as $S_{\text{op}} = -\sum_i\lambda_i\ln\lambda_i$. 
Moreover, since $\sum_i \lambda_i = \tr(X^\dagger X) = \Vert X \Vert_F^2$, after normalizing $X$, the entanglement spectrum may itself be interpreted as a probability distribution. 
The OEE is then the Shannon information entropy of this distribution. 
However, the OEE alone fails to distinguish between important classes of quantum states, such as those produced by integrable and chaotic dynamics.

The entanglement spectrum of a random \textit{state} $\ket{\psi} \in \mathcal H_d$ is known to follow the Marchenko-Pastur (MP) distribution \cite{marchenko1967distribution, collins2016random, Nadal_2011}, given by 
\begin{equation}
\text{MP}_\gamma(x) = \sqrt{(\lambda_+ - x)(x-\lambda_-)}/(2\pi x \gamma)\mathds 1_{[\lambda_-, \lambda+]} \label{eq:MP}
\end{equation}
where $\mathds 1_{[\lambda_-, \lambda_+]}$ is the indicator function of the interval $[\lambda_-, \lambda_+]$ with $\lambda_{\pm} = (1\pm \sqrt{\gamma})^2$. For a state, the parameter $\gamma$ is given by $\gamma = \frac{d_B}{d_A} < 1$. For the special case of $\gamma=1$ (i.e., $d_A = d_B$), the distribution of $\sqrt{\lambda}$ is the half-semicircle distribution $p(\sqrt{\lambda}) = \frac{1}{\pi}\sqrt{4 - \lambda}$, which we use frequently in this paper for better visualization.
The MP distribution arises as the spectral distribution or density of states (DOS) of $X X^\dagger$, called a Wishart matrix, where $X$ is a random variable taking values in $\Mat_{d_A \times d_B}(\mathbb C)$ with independent and identically distributed (i.i.d.) components. 
It is universal in the sense that this limiting distribution occurs independently of the microscopic distribution of the entries of $X$. 
Another way to understand the MP distribution is as the limiting eigenvalue distribution of a matrix of the form $\sum_{i=1}^{d_A}\ketbra{r_i}{r_i}$, where $\{\ket{r_i}\}_{i = 1}^{d_A}$ is a sample of statistically independent vectors of length $d_B$ with components distributed around zero, and $\Vert \ket{r_i}\Vert_2^2 \to \frac{d_B}{d_A} = \gamma$, where $\gamma$ remains fixed as $d \to \infty$ \cite{voiculescu1991limit, feier2012methods}.
\par In addition to the DOS, another common way to probe the spectrum is through the level spacing ratios. The level spacing ratios are defined as $r_n = \frac{\delta_{n+1}}{\delta_n}$, where $\delta_n = \sqrt{\lambda_n} - \sqrt{\lambda_{n-1}}$ and $\{\lambda_i\}$ is in decreasing order. The square roots are for easier comparison to the Wigner surmise for the classical random matrix ensembles, whose level spacing ratio distribution is given by the Wigner-Dyson (WD) formula
\begin{equation}
    p(r) = \frac{1}{Z_\beta}\frac{(r+r^2)^{\beta}}{(1+r+r^2)^{1+\frac{3}{2}\beta}}
    \label{eq:WD}
\end{equation}
where $\beta$ is known as the Dyson index and specifies the random matrix ensemble; $\beta=1$ and $\beta=2$ correspond to the Gaussian orthogonal and unitary ensembles (GOE and GUE), respectively, and $Z_\beta$ is a normalization constant. 
\edit{Unlike the MP distribution, the WD formula is not exact in the $d \to \infty$ limit. An approximate correction is found numerically in \cite{atas2013distribution}.}
The level spacing ratios depend on the joint distribution of the eigenvalues through their order statistics, and so provide more information about the spectrum than the DOS alone.

\section{\label{sec:sec2}OES in Random Unitary Dynamics}
In this section, we seek to characterize chaotic dynamics by probing the OES of local observables. First, we show that the OES can be accurately predicted by comparison with classical random matrix ensembles. In particular, after random unitary evolution, local observables have an entanglement spectrum which closely follows the spectrum of the GOE. Second, we show that these observations also hold in the more realistic setting of random unitary circuits, confirming that chaotic dynamics can be characterized using the OES of local observables. We show that the scaling behavior in a circuit of depth $l$ on $N$ qubits is an exponential function of $l/N$, approaching a scale-invariant power-law around depth $l = N/2$ and thermalizing to the MP distribution at late times.

\subsection{\label{subsec:classical_ensembles} Random evolutions and observables}
Chaotic quantum many-body evolutions can be modeled by Haar-random unitary matrices. 
To formulate an RMT description of operator dynamics under chaotic quantum evolution, we compare the OES of an initial local operator conjugated by random unitary or orthogonal matrices to classical matrix ensembles.

\subsubsection{OES of a Haar-random unitary\label{sec:Haar OES}}
The OES of a Haar-random unitary matrix $U$ [i.e., one drawn from the circular unitary ensemble (CUE)] is governed by the eigenvalues of the matrix $U^{t_{AB}}(U^{t_{AB}})^\dagger$ [see Eq.~\eqref{eq:reduced-super}].
It is known that truncations of Haar-random unitary matrices converge to Gaussian matrices in the large-dimension limit \cite{petz2004asymptotics, Zyczkowski}, making the columns of the matrix $U^{t_{AB}}$ pairwise-independent complex Gaussian variables.
This suggests that the ensemble of Wishart matrices with complex Gaussian entries is the relevant ensemble to describe the OES of $U$. 
When $d_A = d_B$, this is known as the Ginibre ensemble \cite{ginibre1965statistical}. 
If $M$ is distributed according to this ensemble, then $M M^\dagger$ is a random Hermitian matrix, whose level-spacing distribution is given by the WD formula [Eq.~\eqref{eq:WD}] with Dyson index $\beta = 2$. The OES DOS is given by the MP distribution, now with $\gamma = \frac{d_A^2}{d_B^2}$ instead of $\gamma = \frac{d_A}{d_B}$ for random states, because the space of operators has dimension $d_A^2 \times d_B^2$.

This prediction is in excellent agreement with the numerics, as shown in Fig.~\ref{fig:classical_ensembles}. 
\begin{figure}
    \centering
    \def\svgwidth{0.95\linewidth}
    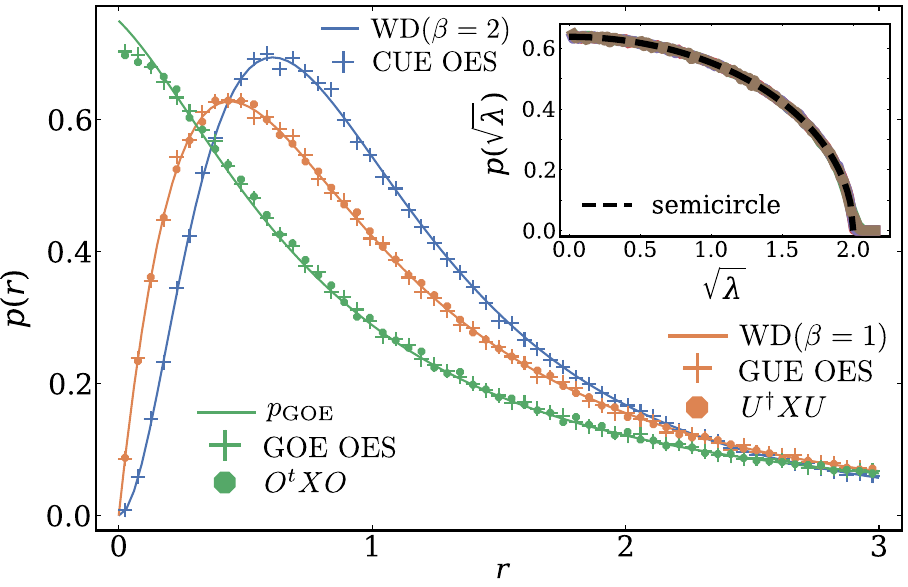
    \caption{Level-spacing ratios of ``random" observables (points) compared to those of matrices drawn from classical random matrix ensembles (crosses). The system consists of 12 sites, and the data represent 64 matrix samples. $X$ is a Pauli-$X$ operator on a single site. $U$ is drawn from the ensemble of Haar-random unitary matrices (CUE) and $O$ is drawn from Haar-random orthogonal matrices (COE). The model distribution $p_{\text{GOE}}$ for the OES spacing ratios of GOE matrices is given in Eq.~\eqref{eq:pgoe}. Inset: the entanglement spectrum DOS of each ensemble approaches the MP distribution.} 
    \label{fig:classical_ensembles}
\end{figure}
The blue line shows the Wigner-Dyson distribution with parameter $\beta = 2$ with the correction in Ref.~\cite{atas2013distribution}. 
Agreement of the DOS with the MP distribution holds for all the ensembles we will consider, and this agreement is shown in the inset, using $\{\sqrt{\lambda_i}\}$ for comparison with the more familiar semicircle.

\subsubsection{OES under chaotic operator dynamics\label{subsec:unitary-Dynamics}}
Next, we examine the entanglement spectrum of a chaotically-evolving local observable.
To model this scenario, we fix the initial operator $X$ to be the Pauli-$X$ operator supported on a single site, and consider the ensemble $\{X_U\equiv U^\dagger XU\}_U$ where $U$ is Haar-random.
The elements of this ensemble are unitary, Hermitian, traceless, and have eigenvalues $\pm 1$.
Thus it is not \textit{a priori} obvious that the OES of $X_U$ is governed by the eigenvalue distribution of Wishart matrices, whose entries are i.i.d. variables normally distributed around zero.
However, the partial transpose defined below Eq.~\eqref{eq:MP} tends to effectively destroy the correlations between the matrix elements, such that the spectrum of $X_U^{t_{AB}}(X_U^{t_{AB}})^\dagger$ reflects that of a matrix with i.i.d. entries. 
The numerical result is shown in Fig.~\ref{fig:classical_ensembles} by the orange data points. 
The orange line shows the WD level-spacing distribution Eq.~\eqref{eq:WD} with parameter $\beta = 1$, which agrees almost perfectly with the average OES of $\{X_U\}_U$, shown by the solid dots. 

We explain this agreement by comparing the OES of $X_U$ to that of an operator $M$ drawn from the GUE---\textit{i.e.}, a random observable. This is illustrated in Fig.~\ref{fig:classical_ensembles} by orange crosses.
A conceptually useful way to understand the entanglement spectrum of $M$ is as the singular values of the matrix 
\begin{align}
\label{eq:alpha}
[\alpha_{ij}] = \tr[(P_i\otimes P_j)^\dagger M] = \opbraket{P_i \otimes P_j}{M},
\end{align}
where $P_i \in \mathcal P_A, P_j \in \mathcal P_B$ are trace-normalized Pauli operators on subsystems $A$ and $B$.
As we show in Appendix~\ref{app:GUE}, the GUE is equivalently specified by sampling \textit{real} independent Gaussian variables in the Pauli basis, so $\alpha_{ij}$ are i.i.d.\ real Gaussian variables.
This implies that the OES level spacings of a GUE matrix will follow the \textit{eigenvalue} level spacings of a GOE matrix,
which is Wigner-Dyson with parameter $\beta = 1$ [Eq.~\eqref{eq:WD}]. This is a ``demotion" because the eigenvalue spacing ratio distribution of GUE matrices is WD with parameter $\beta = 2$. 
The correspondence is summarized in Table~\ref{tbl:classical_ensembles}.
\begin{table}
\centering
\begin{tabular} {c c c }
\hline\hline
Ensemble & OES DOS & OES level spacings \\
\hline
\makecell{Haar-random \\ unitary (CUE)} & MP & $\text{WD}(\beta = 2)$ \\ 
 GUE & MP & $\text{WD}(\beta = 1)$ \\  
 GOE & MP & $p_{\text{GOE}}$ \\
\hline\hline
\end{tabular}
\caption{Table showing the OES DOS and level spacing ratios of classical matrix ensembles. The level spacings of the GOE OES do not follow a Wigner-Dyson distribution, and an approximate expression for $p_{\text{GOE}}$ is given in Eq.~\eqref{eq:pgoe}.
\label{tbl:classical_ensembles}
}
\end{table}

In Fig.~\ref{fig:unitary_convergence_concentration}, we show that the OES DOS of $X_U$ agrees with the MP distribution for different bipartitions, \textit{i.e.}, for multiple values of the parameter $\gamma=\frac{d_B^2}{d_A^2}$. 
This agreement also appears to be self-averaging; as shown in the inset, for sufficiently large Hilbert space dimension $d$ the OES distribution of a single sample from $\{X_U\}_U$ closely approaches the average.
\begin{figure}
    \centering
    \def\svgwidth{0.95\linewidth}
    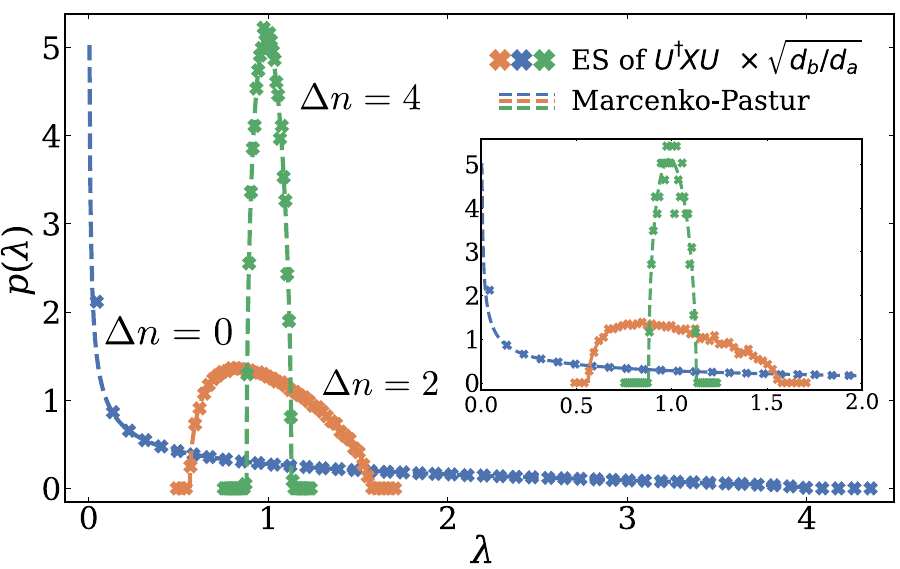
    \caption{Convergence and concentration of unitary-evolved OES to MP for different bipartitions on a system of $N = 10$ sites. $d_A = 2^{N_A}$ and $d_B = 2^{N_B}$ give the subsystem dimensions, and $\Delta N = N_A - N_B$. The scaling factor $\sqrt{d_B/d_A}$ originates in the identical fluctuations of the Pauli coefficients, which controls the average inner product of the columns of the matrix $\alpha_{ij}$ introduced above. The explicit derivation is given in Eq.~\eqref{eq:normalization}. The data represent 128 matrix samples, and the inset is a single sample on a system of size $n = 12$, demonstrating concentration of the spectrum.}
    \label{fig:unitary_convergence_concentration}
\end{figure}

Lastly, due to the time-reversal symmetry of the automaton circuits to be considered later, we also need to understand the OES statistics of random \textit{real} observables.
More precisely, we now consider the OES of the ensemble $\{X_O\equiv O^tXO\}_O$ where $O$ are drawn from the uniform distribution on the orthogonal group.
We model $X_O$ by a matrix $M$ sampled from the GOE, which is just the real part of the GUE. This is equivalent to sampling $M$ from the GUE and setting to zero each coefficient $\alpha_{ij}$ [Eq.~\eqref{eq:alpha}] corresponding to a complex Pauli operator. 
In Appendix~\ref{ap:GOE}, we show that the OES of $M$ can be reduced to that of a direct sum of two real and independent Gaussian matrices. 
We derive an approximate expression for this distribution. 
In a similar spirit to the Wigner surmise, we treat the smallest example that effectively captures the behavior of the larger ensemble, computing the joint OES spacing distribution of a $2\times 2$ GOE block and a single random Gaussian variable. We find this is described by the distribution
\begin{equation}
\label{eq:pgoe}
p_{\text{GOE}}(r) \approx \frac{3}{4}\frac{1+r}{(1+r+r^2)^{3/2}}.
\end{equation}
Note that, like the Wigner surmise, this is not exact for large dimensions.
Unlike the Wigner surmise, our prediction for $p_{\text{GOE}}(r)$ does not vanish at $r=0$, indicating the absence of level repulsion owing to the appearance of independent blocks.
In Fig.~\ref{fig:classical_ensembles}, we compare the OES and level spacing ratios of a GOE matrix (green crosses) with the OES of the ensemble $\{X_O\}_{O}$ (solid green dots). 
We find that the spacing ratios agree well with the prediction (solid green line), and the two ensembles agree almost perfectly with each other.

\subsection{\label{subsec:unitary_evo} Locally generated chaotic unitary dynamics}
In realistic scenarios, dynamics are understood as resulting from local interactions, and quantum circuits provide a structured model for such dynamics. 
Quantum circuits that act on a chain of qubits and are comprised of random 2-local gates are colloquially known as random unitary circuits (RUCs). 
It is reasonable to ask whether the model of chaotically evolved observables as random Hermitian matrices breaks down when the evolution is generated by RUCs instead of all-to-all Haar-random unitary operators. 
To answer this, we consider a one-dimensional chain of $N$ qubits and again take as our initial operator the Pauli-$X$ operator on a single site near the middle of the chain. The difference between the emperically measured OES DOS and level spacings from the predicted distributions at subsequent times is quantified by the Kullback-Leibler (KL) divergence, a standard divergence measure between two probability distributions $P$ and $Q$, which is defined as the difference in expected surprisal when evaluated using the ``true" distribution $P$ vs the ``model" distribution $Q$:
\begin{equation}
D_{\text{KL}}(P \Vert Q) = \sum_{x \in X}P(x) \log(\frac{P(x)}{Q(x)})
\end{equation}
This divergence is defined when $P$ is absolutely continuous with respect to $Q$.
As shown in Fig.~\ref{fig:thermalizing in RUCs}, we find that both the empirical OES DOS (red) and spacing ratio statistics (blue) converge at late times to their universal distributions. The data represent $2^{18}$ eigenvalue samples for each system size taken at circuit depths $l = 1, \dots, 15$. 

Fig.~\ref{fig:thermalizing in RUCs} focuses on the evolution of these distributions in time, as measured by the number of brick-wall layers $l$ in the RUC, and on their scaling with system size $N$.
Such features have been studied previously for the \textit{state} entanglement spectrum under RUCs~\cite{chang2019evolution}, and were found to exhibit universal behavior. 
In particular, the entanglement spectrum statistics undergo a radical change, evolving from sharply peaked about the initial product state, then passing through a scale-invariant power-law distribution, and finally equilibrating at late times to the MP distribution. 
We observe a qualitatively similar behavior in the OES of operators evolved under RUCs. 
For a circuit on $N$ qubits, we see the appearance of a power-law OES distribution at depth $l=N/2$ with a non-universal exponent, and convergence to the MP distribution at late times. 
The qualitative similarity between the dynamics of the OES and the state ES under RUCs is not a trivial consequence of state-operator duality.
If $U$ is Haar-random or an RUC approximating a Haar-random unitary, the vectorized initial operator $X$ evolves as $\opket{UXU^{\dagger}} = U \otimes U^\ast \opket{X}$ (where the tensor product here reflects the incoming/outgoing indices instead of the bipartition.) Since $U \otimes U^\ast$ is not Haar-random (or approximately Haar-random) on $U(\mathcal B(H))$, this implies $\opket{UXU^{\dagger}}$ is not simply a Haar-random (or approximately Haar-random) state on the enlarged Hilbert space $\mathcal H_d^{\otimes 2}$.
There is also a difference in the equilibration of the OES spacing ratios (blue) and the spectrum itself (red). 
The former reaches its late-time value already at depth $l = N/2$, which is the time when the extent of the operator light cone stretches across the entire system. 
Note, however, that our comparison of the OES level spacings to the WD distribution is limited by the fact that the expression [Eq.~\eqref{eq:WD}] is not exact in the $d \to \infty$ limit.

\begin{figure}
    \centering
    \def\svgwidth{0.95\linewidth}
\begingroup%
  \makeatletter%
  \providecommand\color[2][]{%
    \errmessage{(Inkscape) Color is used for the text in Inkscape, but the package 'color.sty' is not loaded}%
    \renewcommand\color[2][]{}%
  }%
  \providecommand\transparent[1]{%
    \errmessage{(Inkscape) Transparency is used (non-zero) for the text in Inkscape, but the package 'transparent.sty' is not loaded}%
    \renewcommand\transparent[1]{}%
  }%
  \providecommand\rotatebox[2]{#2}%
  \newcommand*\fsize{\dimexpr\f@size pt\relax}%
  \newcommand*\lineheight[1]{\fontsize{\fsize}{#1\fsize}\selectfont}%
  \ifx\svgwidth\undefined%
    \setlength{\unitlength}{511.39178467bp}%
    \ifx\svgscale\undefined%
      \relax%
    \else%
      \setlength{\unitlength}{\unitlength * \real{\svgscale}}%
    \fi%
  \else%
    \setlength{\unitlength}{\svgwidth}%
  \fi%
  \global\let\svgwidth\undefined%
  \global\let\svgscale\undefined%
  \makeatother%
  \begin{picture}(1,0.85134425)%
    \lineheight{1}%
    \setlength\tabcolsep{0pt}%
    \put(0,0){\includegraphics[width=\unitlength,page=1]{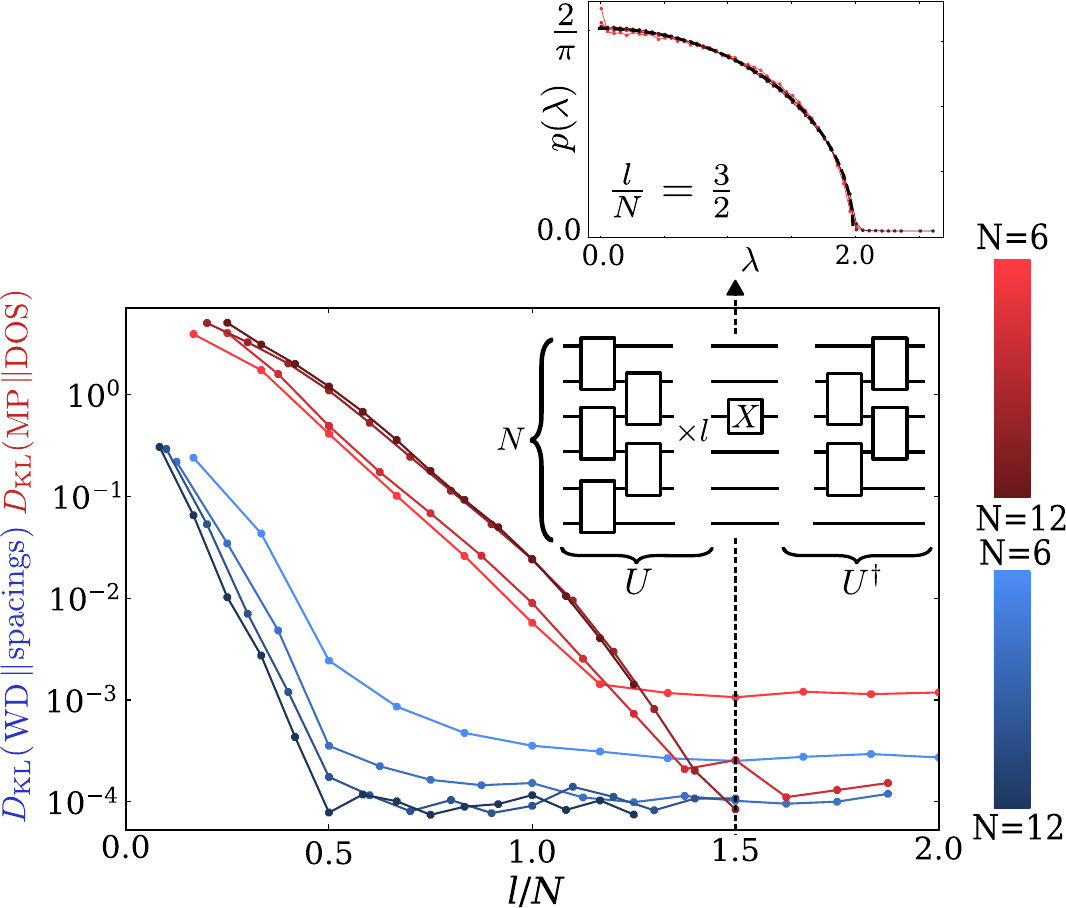}}%
    \put(0.63983071,0.42527765){\color[rgb]{0,0,0}\makebox(0,0)[lt]{\lineheight{0.2}\smash{\begin{tabular}[t]{l}...\end{tabular}}}}%
    \put(0.73372546,0.42572718){\color[rgb]{0,0,0}\makebox(0,0)[lt]{\lineheight{1.25}\smash{\begin{tabular}[t]{l}...\end{tabular}}}}%
    \put(0,0){\includegraphics[width=\unitlength,page=2]{figures/distribution_evolution_unitary_svg-tex.pdf}}%
  \end{picture}%
\endgroup%

    \caption{Convergence of DOS to MP and level spacings to WD as a function of circuit depth. The inset in the lower panel shows a circuit diagram of a local Pauli-$X$ gate evolved under a random unitary circuit on $N$ qubits with $l$ layers. The top left panel shows that the OES passes through a power-law with an apparently non-universal exponent $\eta$ at $l/N = \frac{1}{2}$, which is when the lightcone reaches the edge of the system. At the same time, the level spacings are nearly indistinguishable from WD. \edit{While the level spacings appear not to asymptotically approach the WD distribution, this is because} the WD distribution is not exact for large system sizes. \edit{We plot with the gray dashed line the minimum that can be reached numerically, which we obtain from directly sampling GUE matrices of the same system size and computing $D_{\text{KL}}$.} The top right inset shows the clear emergence of the semicircle in the OES at late times.}
    \label{fig:thermalizing in RUCs}
\end{figure}

\section{\label{sec:sec3}OES in Automaton Dynamics}

Automaton circuits share many of the features of chaotic quantum dynamics~\cite{bertini2024quantum, iaconis2019anomalous}, including the generation of extensive operator entanglement. 
In this section, we show that the OES of automaton evolutions manifests a distinct universality class from that of chaotic unitary dynamics.

\subsection{Automaton OES and Bernoulli matrices}
\subsubsection{OES of a random permutation}
Automaton circuits are a subset of unitary quantum circuits whose action maps computational basis states to computational basis states.
As an example, a CNOT gate maps $\ket{11}\leftrightarrow\ket{10}$ while leaving $\ket{00}$ and $\ket{01}$ fixed; it generates no entanglement when acting on a computational basis state, but does when acting on a superposition of computational basis states. In general, an automaton circuit $P$ is typically defined with the possibility of applying a phase: with respect to the computational basis $\ket{n}$, $P\ket{n} = \e^{\ii\theta_n}\ket{P(n)}$, where $P(n)$ is a permutation applied to $n$. We focus on the case $\theta_n = 0$ for the connection to classical binary logic. Much like a generic unitary evolution can be modeled by a Haar-random unitary matrix $U$ acting on $\mathcal H_d$, a generic automaton evolution can be modeled by a random permutation matrix $P$ acting on the same Hilbert space.

For a Haar-random unitary $U$,
we argued in Sec.~\ref{subsec:classical_ensembles} that the partial transpose $t_{AB}$ effectively removes correlations between matrix elements of $U$, such that $U^{t_{AB}}$ resembles a sample from a classical random matrix ensemble in the $d\to\infty$ limit.
\edit{While identifying general criteria for the emergence of independence through the partial transpose in this way is an interesting mathematical question which we leave for future work, the feature that appears to connect these two ensembles is invariance under the unitary and permutation group respectively.}
Applying this logic to the ensemble of random automata $\{P \in S_{2^N}\}$ (where $S_{2^N}$ is the permutation group on the $N$-qubit computational basis) suggests that the relevant random matrix ensemble describing $P^{t_{AB}}$ is one of matrices with entries randomly drawn from the set $\{0,1\}$.
A square matrix having entries $0$ or $1$ drawn with probability $p(1)=1-p(0)=p$ is called a Bernoulli random matrix~\cite{guionnet2021bernoulli}.
After taking the partial transpose, we expect the matrix $P^{t_{AB}}$ to have a single $1$ per column, on average, in the $d\to\infty$ limit.
This leads us to the hypothesis that the OES statistics of $P$ should be compared to the statistics of the singular values of Bernoulli matrices $B$---or equivalently, the eigenvalues of $BB^t$---with $p=1/d$ in the limit $d\to\infty$.

In Appendix~\ref{app:moments}, we show that calculating the $k^{\text{th}}$ moment of the averaged eigenvalue distribution of $BB^t$ can be reduced to a partition counting problem. 
Performing this numerically for the first few moments gives
\begin{align}
\label{eq:mk}
m_k &= \frac{1}{d}\mathbb E_B \tr[(BB^t)^k] \notag\\
&= 1, 3, 12, 57, 303, 1747, 10727,  69331 \dots
\end{align}
This sequence was found previously in Ref.~\cite{khorunzhy2000asymptotic} as the even moments of the spectrum of the adjacency matrix of a random undirected graph. Ref.~\cite{khorunzhy2000asymptotic} obtains a recursion relation for $m_k$ and shows that $(k/4)^{k/4} \lesssim m_k \lesssim (Ck)^{k}$ for some constant $C$, which implies that this is the unique distribution with this sequence of moments and also that it has unbounded support. 

We can also show that these moments concentrate in the large-dimension limit, in the sense that the fluctuations go to zero as $d \to \infty$:
\begin{align}
P\qty(\left|\frac{1}{d}\tr[(BB^t)^k] - m_k\right| > \epsilon )\to 0
\end{align}
for any $\epsilon > 0$. 
The proof is also found in Appendix~\ref{app:moments}. 

The same argument suggests that an automaton circuit with phases should be modeled by a related matrix ensemble $\{B_{\theta}\}_{B, \theta}$, where $(B_\theta)_{mn} = B_{mn}\e^{\ii \theta_{mn}}$ with $B$ a random Bernoulli matrix and $\{\theta_{mn}\}_{mn}$ i.i.d. around the circle. It turns out that the singular value distribution of this ensemble is the same as that of the Bernoulli ensemble regardless of the distribution of $\theta_{mn}$, so we expect that this universal entanglement spectrum for automaton circuits holds regardless of the phases. This prediction agrees with numerics. The proof is given in Appendix~\ref{app:generic}.

\begin{figure}
\centering
\def\svgwidth{.95\linewidth}
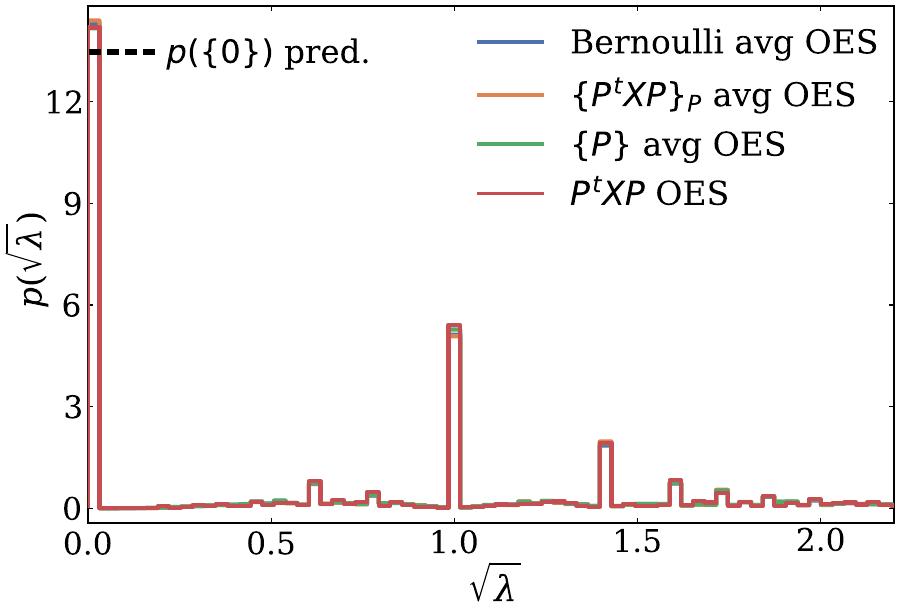
\caption{The average OES of the Bernoulli matrix ensemble and the average OES of the ensembles $\{P\}_P$, where $P$ is a random automaton, and $\{P^tXP\}_P$, where $X$ is a Pauli-$X$, are seen to agree. The OES of $P^tXP$ for a \textit{single} automaton sample $P$ also shows good agreement, suggesting that this is not just an average phenomenon. This defines another universal spectrum which is sharply distinct from the semicircle law for random unitary matrices. The prediction for the average entanglement rank is $d(1-p(\{0\}))$, where $p(\{0\}) \gtrsim \e^{-\frac{1}{\e}}+\frac{2}{\e} -1 \approx 0.43$. This prediction is made by using the independence assumption to estimate the average number of unique columns in a sample and is shown by the dashed line (rescaled by the bin size). The numerical value is $p(\{0\}) \approx 0.46$
\label{fig:automaton_ES}
}
\end{figure}

In Fig.~\ref{fig:automaton_ES}, we plot the average OES DOS of a single random permutation on $N=12$ qubits against the average spectrum of a Bernoulli matrix, the average OES of random automata, and the average OES of the ensemble $\{P^tXP\}_P$, where $X$ is a Pauli-$X$. These distributions all agree remarkably well. Furthermore, we see a sharp qualitative difference with the semicircle distribution governing the OES of random unitary dynamics. In particular, accumulations of levels (also known as ``atoms") are seen at certain algebraic integers, most prominently $\lambda=0$ and $\lambda=1$. 

The agreement between the automaton and Bernoulli OES can be a useful calculational tool. The value of $p(\lambda = 0)$ is $1-r/d_B^2$, where $r$ is the average Schmidt rank of a permutation. We can lower-bound this value by the average number of repeated columns in $B$. For the case $d_A = d_B$, we derive the approximate expression $p(\{0\}) \approx \e^{-\frac{1}{\e}}+ \frac{2}{\e} - 1$ in Appendix~\ref{sec:kernelprob} by estimating the number of linearly independent columns in a Bernoulli matrix of the same size. 
The emergence of other atoms in the singular value distribution of Bernoulli matrices $B$ can be understood by viewing $B$ as the adjacency matrix of a directed graph. The adjacency matrix has the property that it can be block-diagonalized into blocks corresponding to the connected components of the graph. 
In Appendix~\ref{app:moments}, we provide numerical evidence that the graph topology is controlled by that of a random undirected graph with edge probability $pd = 2$. In particular, in the large-dimension limit, there is one large component of the graph which increases with system size and is responsible for any continuous component of the spectrum. The remaining components are trees of size $\mathcal O(1)$, and the generation of these components is predicted by a Poisson branching process \cite{alon2015probabilistic}. Eigenvalues resulting from these trees are algebraic integers of low degree. The comparison between this prediction and the topology of the graph corresponding to $(P^tXP)^{t_{AB}}$ is shown in Appendix~\ref{app:moments} in Fig.~\ref{fig:cmp_sizes}. However, removal of these components does not appreciably change the singular component of the spectrum, suggesting that eigenvector localization on subtrees of the large component is also responsible for the point spectrum.

\subsubsection{OES under automaton dynamics}

Under unitary evolution, the OES DOS of a local observable $X$ follows the MP distribution, as shown previously. 
Automaton evolution maps diagonal operators to diagonal operators, and for this reason, most of the features of automaton circuits that resemble random unitary dynamics are found in off-diagonal operators \cite{bertini2024quantum}. 

For completeness, we will first treat diagonal observables. If $Z$ is a local Pauli-Z operator and $P$ is a random automaton, then we can model $P^tZP$ as a diagonal operator with $-1,1$ randomly on the diagonal with equal probability. The partial transpose $t_{AB}$ takes the diagonal indices, which are of the form $(i, j),(i, j)$ with $i \in \{1 \dots d_B\}$ and $j \in \{1 \dots d_A\}$ to the indices $(i,i),(j,j)$. From this it is seen that, of the $d_A^2$ columns of the matrix $(P^tZP)^{t_{AB}}$ each of length $d_B^2$, precisely $d_A$ of them have non-zero elements corresponding to the choices of $j$. Each of these columns has $d_B$ non-zero entries corresponding to the choices of $i$. Thus if $r$ is a nonzero column of $(P^tZP)^{t_{AB}}$, then $\Vert r \Vert^2 = 2^{N/2}$. This suggests that the non-zero values in the OES of $2^{-N/4}\times P^tZP$ are MP-distributed, as confirmed in Fig~\ref{fig:generic_operator}. If $d_A = d_B$, then the probability that a given Schmidt coefficient is zero, $p(\{0\}) = 1-d^{-1/2}$, goes to unity as $d_B \to \infty$. 

\begin{figure}
    \centering
    \def\svgwidth{0.95\linewidth}
    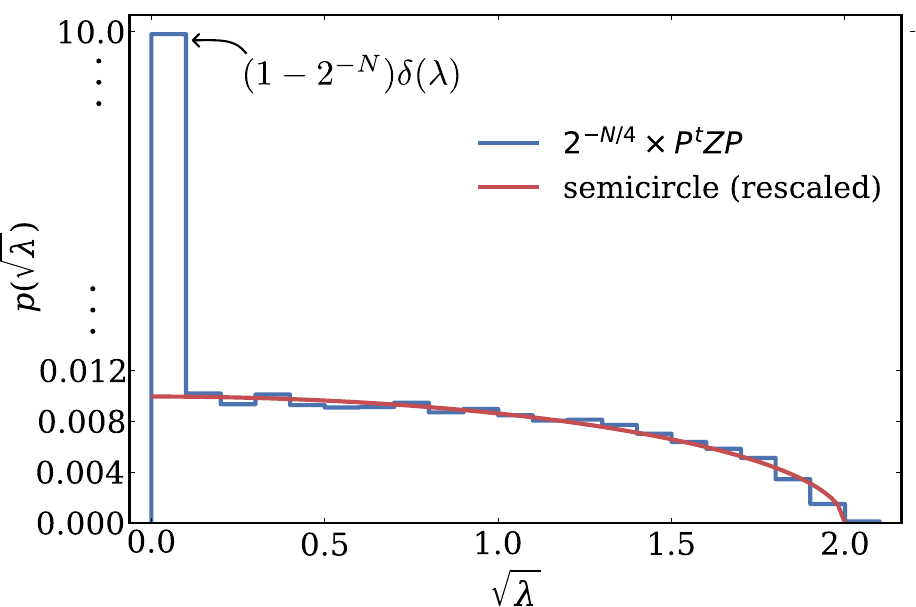
    \caption{A Pauli-$Z$ operator is evolved under a random automaton $P$. The data show $64$ samples on $N=12$ qubits, totaling $2^{18}$ singular values. The $2^{12}$ non-zero singular values a follow the MP distribution upon rescaling as explained in the text. The OES DOS of $2^{N/4}P^tZP$ then becomes $p(\lambda) = (1-2^{-N})\delta(\lambda) + 2^{-N}\text{MP}(\lambda)$, which agrees with the figure.}
    \label{fig:generic_operator}
\end{figure}

We now turn our focus to off-diagonal operators. If $X$ is a Pauli-$X$ operator and $P$ is a random permutation, then the permutation $P^tXP$ will also appear random, so we expect the OES of $P^tXP$ to be determined as well by the singular value distribution of the ensemble of Bernoulli matrices. This prediction shows good agreement with the numerics, as presented in Fig.~\ref{fig:automaton_ES}. In general, a single-qubit off-diagonal observable $A$ must take the form $A \propto \smqty(0 & z \\ z^\ast & 0)$ for some complex $z$ with $|z| = 1$, and therefore any product of single-qubit off-diagonal observables has the same structure as $P^tXP$, but with roots of unity instead of 1's in the matrix elements. As argued in the previous section, this ensemble is modeled by $\{B_{\theta}\}_{B, \theta}$, which has the same singular value distribution as the Bernoulli ensemble. 

Numerically, we compare the moments of the OES of $P^tXP$ to the sequence of moments given by Eq.~\eqref{eq:mk} in Fig.~\ref{fig:bernoulli_vs_permutation}, and find that the estimated moments approach their predicted values exponentially in the system size. Since the method of moments is applicable here, this suggests that the OES of $P^tXP$ concentrates to the limiting distribution of the singular values of Bernoulli matrices. Similarly, the OES of $U^\dagger X U$ concentrates to the limiting distribution of a GUE matrix. Despite their apparent similarities, the OES sharply distinguishes random unitary and automaton dynamics.

\begin{figure}
    \centering
    \def\svgwidth{0.95\linewidth}
    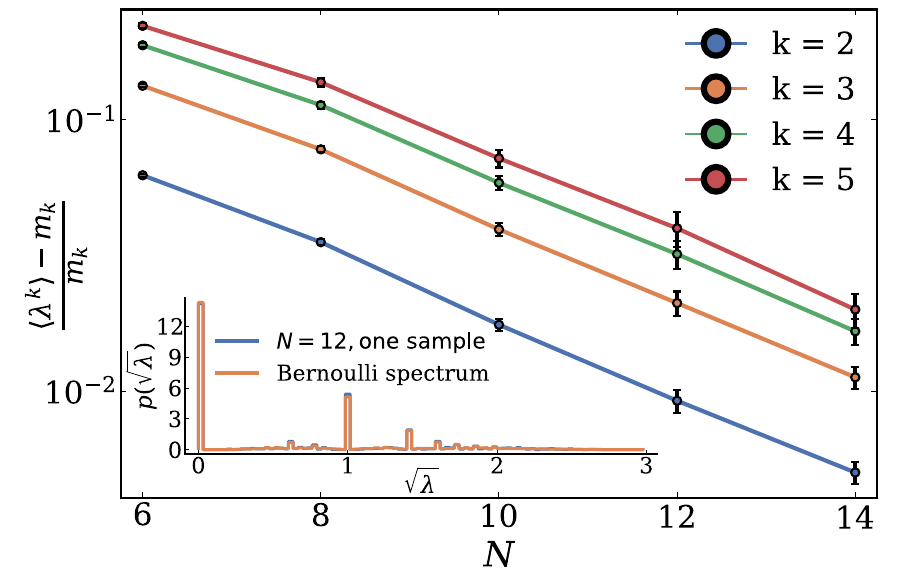
    \caption{Convergence of the OES moments of $P^tXP$, where $X$ is a local Pauli-$X$ and $P$ is a random automaton, to the moments of the Bernoulli ensemble with increasing system size. 
    The first few moments of the Bernoulli spectral distribution $m_k$ are computed using the formula in Eq.~\eqref{eq:mk}. The data represent $2^{20}$ values $\{\lambda\}$ from the OES of randomly sampled automaton circuits at each system size.  The error bars are estimated using resampling. The inset shows the estimated OES DOS of a single automaton at system size $N=12$, which agrees well with the distribution shown in Fig.~\ref{fig:automaton_ES}, providing evidence that the moments are indeed a good measure of convergence in distribution.}
    \label{fig:bernoulli_vs_permutation}
\end{figure}

\subsection{Locally generated random automaton evolutions}
Since evolution of off-diagonal operators under local automaton circuits is known to resemble RUCs from the perspective of, e.g., hydrodynamics or certain OTOCs, we expect a qualitative similarity in the way the OES approaches the universal distribution.
In Fig.~\ref{fig:local_automaton_ES}, we study the OES of a local Pauli-$X$ operator in the middle of a chain of $N$ qubits evolved under a local random automaton circuit as a function of circuit depth and system size.

We sample local automaton circuits by composing three layers of permutations each acting on three qubits (see inset at lower-right), as it has been shown that 3-local permutations can generate the entire alternating group, consisting of even permutations (odd permutations require an $N$-qubit gate or an ancilla qubit) \cite{coppersmith1975generators}. 
Since the OES distribution of a random automaton-evolved observable is expected to have a dense set of atoms and a continuous component, the KL divergence will not be an accurate measure of the similarity between the universal distribution and the ideal distribution. 
In contrast, the moments $m_k$ (see Eq.~\eqref{eq:mk}) are easy to compute.
We therefore track the convergence of the 2nd moment of the OES of the evolved operator, which we denote by $\expval{\lambda^2}$, to the ideal value $m_2=3$ as a function of the number of layers $l$ in the circuit. 
We observe a behavior which is qualitatively similar to RUCs: the relative error $\Delta m_2 = \frac{\langle \lambda^2\rangle - m_2}{m_2}$ converges as a function of $l/N$ exponentially with a constant independent of system size, and then at late times thermalizes to a value that decreases exponentially with system size.
This provides a more precise description of the way operator dynamics in random automaton circuits resemble those in RUCs with respect to two different universal distributions.

\begin{figure}
    \centering
    \def\svgwidth{0.95\linewidth}
\begingroup%
  \makeatletter%
  \providecommand\color[2][]{%
    \errmessage{(Inkscape) Color is used for the text in Inkscape, but the package 'color.sty' is not loaded}%
    \renewcommand\color[2][]{}%
  }%
  \providecommand\transparent[1]{%
    \errmessage{(Inkscape) Transparency is used (non-zero) for the text in Inkscape, but the package 'transparent.sty' is not loaded}%
    \renewcommand\transparent[1]{}%
  }%
  \providecommand\rotatebox[2]{#2}%
  \newcommand*\fsize{\dimexpr\f@size pt\relax}%
  \newcommand*\lineheight[1]{\fontsize{\fsize}{#1\fsize}\selectfont}%
  \ifx\svgwidth\undefined%
    \setlength{\unitlength}{443.59869385bp}%
    \ifx\svgscale\undefined%
      \relax%
    \else%
      \setlength{\unitlength}{\unitlength * \real{\svgscale}}%
    \fi%
  \else%
    \setlength{\unitlength}{\svgwidth}%
  \fi%
  \global\let\svgwidth\undefined%
  \global\let\svgscale\undefined%
  \makeatother%
  \begin{picture}(1,0.64923546)%
    \lineheight{1}%
    \setlength\tabcolsep{0pt}%
    \put(0,0){\includegraphics[width=\unitlength,page=1]{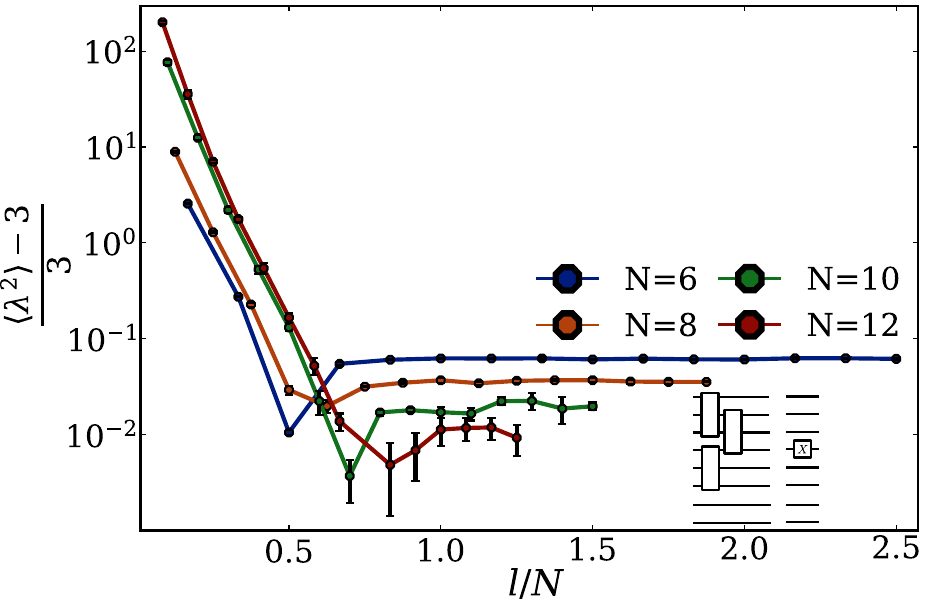}}%
    \put(0.83468488,0.14201828){\color[rgb]{0,0,0}\makebox(0,0)[lt]{\lineheight{1.25}\smash{\begin{tabular}[t]{l}...\end{tabular}}}}%
    \put(0.88898177,0.14228697){\color[rgb]{0,0,0}\makebox(0,0)[lt]{\lineheight{1.25}\smash{\begin{tabular}[t]{l}...\end{tabular}}}}%
    \put(0,0){\includegraphics[width=\unitlength,page=2]{figures/local_permutation_moments_svg-tex.pdf}}%
  \end{picture}%
\endgroup%

    \caption{Convergence of second moment of automaton-evolved local $X$ OES with system size and circuit depth. The construction of the circuits using 3-local permutations, similar to the construction of a RUC shown in Fig.~\ref{fig:thermalizing in RUCs}, is depicted in the lower-right-hand corner. This convergence exhibits a similar scaling as the KL divergence of local unitary circuits from MP, exhibiting an analogue between classical and quantum information chaos. The inset shows a comparison of the full OES of a $2^{12} \times 2^{12}$ Bernoulli matrix to an automaton-evolved $X$ operator at system size $N=12$ and circuit depth $l=15$, supporting the idea that the second moment is monitoring the convergence of the full spectrum.}
    \label{fig:local_automaton_ES}
\end{figure}

\section{\label{sec:sec4}Classical-Quantum Crossover in Superposition-Doped Automaton Circuits}
It is well-known~\cite{aharonov2003simple} that permutation gates together with the Hamadard gate form a universal gate set; classical logic, when augmented with superpositions, is enough to approximate any unitary operation. In this section, we show that this connection between universal classical and quantum gates is reflected by a transition in the OES from the Bernoulli OES identified in the previous section to the GUE OES identified in Sec.~\ref{sec:Haar OES}. This transition is probed by ``doping" random automaton circuits with quantum gates that create superpositions (referred to in the Introduction as SG gates). We examine the effect of doping with SG gates on the OES DOS and also on the level spacing ratios, finding that the OES DOS becomes exponentially close to the MP distribution in the number of such gates independent of system size, and the level spacing ratios become exponentially close to the WD distribution in system size for any number of superposition gates.

\subsection{OES Distribution}
In Ref.~\cite{zhou2020single}, it was shown that the ES spacing ratios of a random product state evolved under a Clifford circuit undergo a transition from Poisson to Wigner-Dyson statistics as the circuit is ``doped" with a density of $T-$gates that is vanishingly small with increasing system size. 
This was conjectured to signal the transition to a unitary 4-design, which was later proven rigorously in Ref.~\cite{haferkamp2020quantum}.
The corresponding increase in simulation complexity was studied using the OEE growth in Ref.~\cite{dowling2025bridging}.
Clifford circuits are classically easy to simulate because they act as automaton circuits on the space of operators. 
If $C$ is a Clifford operator acting by conjugation on a Pauli operator $P_i$, then for each $1 \leq i \leq d^2$, we have $C^\dagger P_i C = \e^{\ii\theta_{i}}P_j$ for some $j$ by the fact that the Clifford group is the normalizer of the Pauli group. 
If $\mathcal C$ is the adjoint representation of $C$ acting on Pauli-string space, then it is equivalent to say $\mathcal C\opket{P_i} = \e^{\ii\theta_i}\opket{P_j}$, which is the definition of an automaton. 
Similarly, with $T = \e^{-\ii\frac{\pi}{8}Z}$, for the $2^n$ Pauli operators $P_i$ which anti-commute with $Z$, we have $TP_iT^\dagger = \frac{1}{\sqrt{2}}P_i(1+\ii Z)$. 
Writing this situation in Pauli-string space, $\mathcal T \opket{P_i} = \frac{1}{\sqrt{2}}(\opket{P_i} + \ii\opket{P_iZ})$. 
In other words, the complex state dynamics result from the superposition of just a few trajectories in Pauli-string space.

Automaton circuits act on state space equivalently to the way Clifford circuits act on Pauli-string space. 
The addition of SG gates to an automaton circuit results in state dynamics consisting of a superposition of automaton dynamics along multiple trajectories. 
We show in this section that doping an automaton circuit with SG gates provokes a transition in the OES in a manner very similar to the one induced by $T$ gates in Clifford circuits. 
In Fig.~\ref{fig:DOS_convergence}, we observe that the addition of Hadamard gates causes the OES DOS of initially local $X$ and $Z$ operators to converge to the MP distribution. 
These data were obtained from $2^{20}$ eigenvalues at each system size by conjugating a local Pauli-$X$ (top panel) or Pauli-$Z$ operator (bottom panel) with a random Hadamard-doped automaton circuit $U$. This circuit is obtained by sampling two random permutations $P_1, P_2$ and $n_H$ locations to place Hadamard gates $H$ in between these permutations, obtaining $U = P_1 H^{\otimes n_H}P_2$.
The KL divergence from the MP distribution is initially independent of the system size and then saturates at a value that is exponentially small in system size. 

\begin{figure*}
    \centering
    \def\svgwidth{0.45\linewidth}
    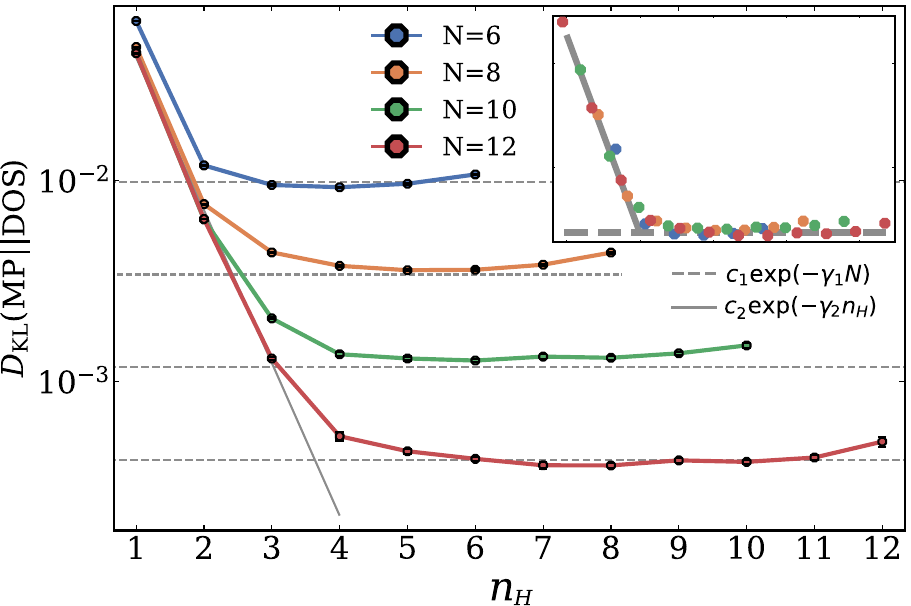
    \def\svgwidth{0.45\linewidth}
    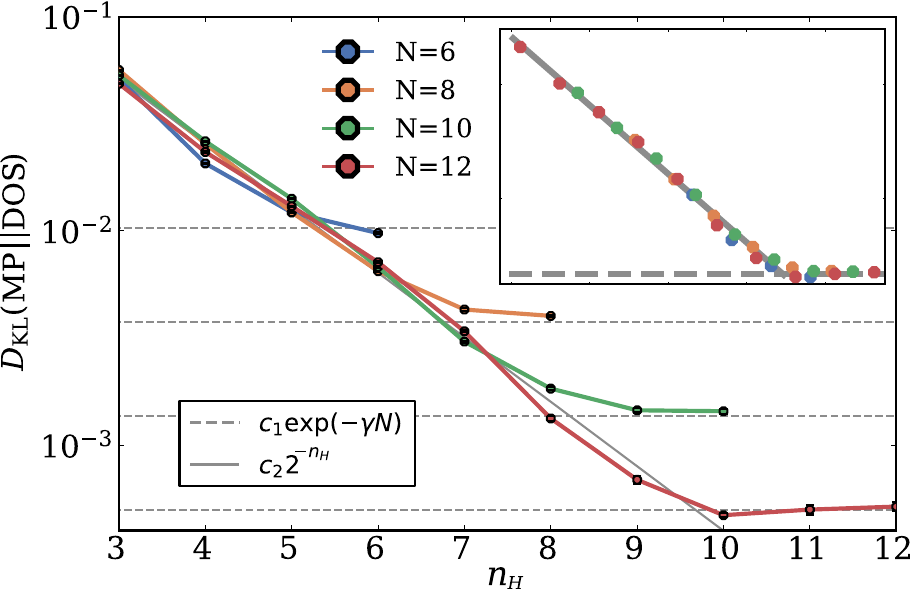
    \caption{As the number of Hadamard gates $n_H$ increases, the entanglement spectrum of an evolved local operator approaches the universal MP distribution. The top panel shows an initial $X$ operator and the bottom panel shows an initial $Z$ operator. For small $n_H$, the convergence is exponential in $n_H$, and for $n_H \sim N$, the convergence is exponential in $N$. This is illustrated in the inset by a universal scaling function. The data represent $2^{20}$ eigenvalue samples at each system size, and the error bars (too small to be visually apparent) are estimated via resampling.}
    \label{fig:DOS_convergence}
\end{figure*}

The dependence on whether the initial operator is diagonal or off-diagonal disappears when  $n_H \sim N$. 
The rate of convergence to the MP distribution is dependent on the initial evolved operator: comparing the top and bottom panels of Fig.~\ref{fig:DOS_convergence}, the convergence for an initial Pauli-$Z$ operator is significantly slower. 
This could be explained heuristically through the localization in Pauli-string space. 
If $Z$ is conjugated by $U$, then first conjugating by $P_1$ creates a random superposition of tensor products of local $Z$ operators. Each string of $Z$-operators is then conjugated by the randomly placed Hadamard gates, turning some of them into $X$ operators, which connects exponentially larger blocks in Pauli-string space. Since the convergence to the MP distribution shown in the top panel Fig.~\ref{fig:DOS_convergence} is much faster than $2^{-n_H}$, eigenvalues from the diagonal blocks dominate the KL divergence, leading to the $D_{\text{KL}}(\text{MP} \Vert \text{DOS}) \sim 2^{-n_H}$ scaling observed in the lower panel of Fig.~\ref{fig:DOS_convergence}. This argument is given more precisely in Appendix \ref{app:operator_dependence}.

\subsection{Level Spacings}
We examine the level-spacing statistics of the OES under superposition-doped automaton circuits, which provide more information about the operator than the DOS alone. 
In particular, the MP distribution can arise whenever an operator in the Pauli basis has blocks that are extensively large in the system size and fluctuate independently.
Since the level spacings are sensitive to the joint distribution of the eigenvalues, these spacings are another way to probe ergodicity in Pauli-string space. Since $X_P$ and $Z_P$, where $P$ is a random automaton, are completely real, we expect their level spacings to be reminiscent of GOE operators rather than GUE or Haar-random operators. 
This clearly must hold for a real operator, and in fact it holds for any local Clifford operator, as demonstrated in Appendix~\ref{ap:GOE}, see Fig.~\ref{fig:spacingstats}.
Since we only have an approximate expression for $p_{\text{GOE}}$ [Eq.~\eqref{eq:pgoe}], to examine the convergence of the level spacings in superposition-doped automata, we opt to replace the $H$ gates with the single-qubit gate $R_x=\e^{-\ii\frac{\pi}{4}X}$, which re-introduces level repulsion.
After making this substitution, we find that the convergence of the OES level spacings to Wigner-Dyson is qualitatively different than the convergence of the DOS to MP, as shown in Fig.~\ref{fig:level_spacing_convergence}. Principally, unlike the KL divergence of the DOS, the KL divergence of the level spacings decreases exponentially with system size for small $n_{R_x}$.
This indicates that the addition of a single superposition-creating gate in automaton circuits acts similarly to the $T$ gate in Clifford circuits: a homeopathic dose in a large system drives a sharp transition in the OES of an evolved local operator in the same way that a single $T$ gate in a Clifford circuit drives a transition in the level spacings of an evolved random product state.

\begin{figure}
    \centering
    \def\svgwidth{0.95\linewidth}
    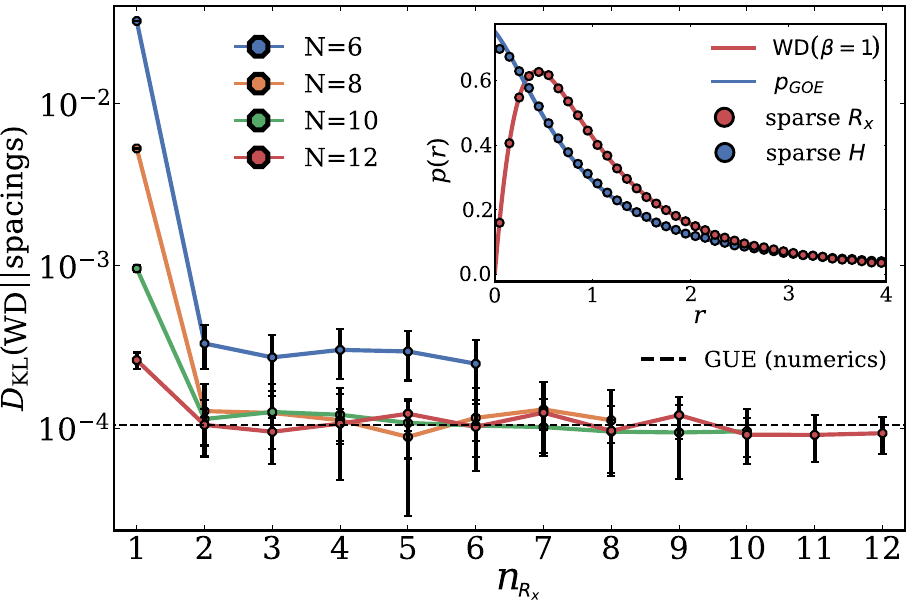
    \def\svgwidth{0.95\linewidth}
    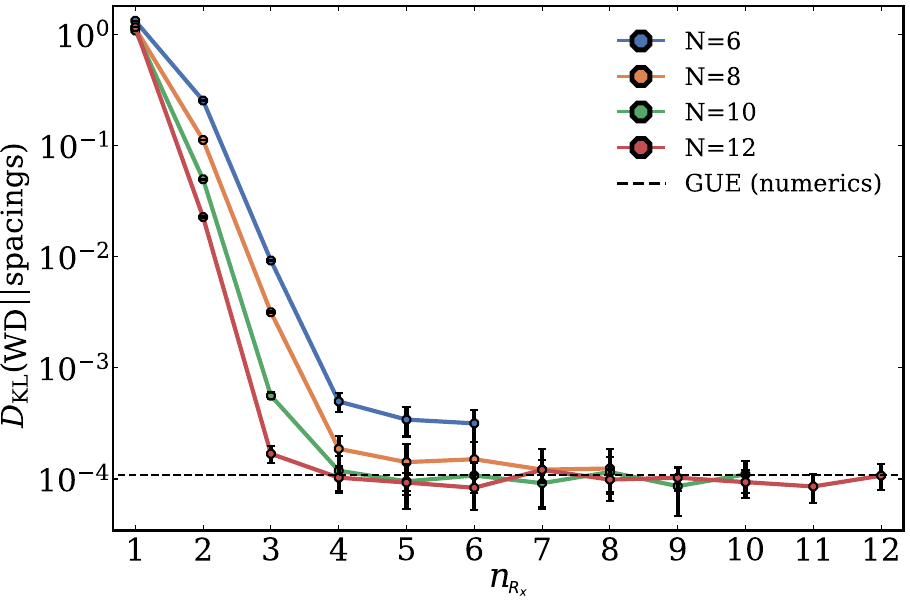
    \caption{
    Convergence of level spacing ratios to WD with number of $R_x$ gates inserted into the circuit. The left panel shows an initial operator $X$ and the right panel shows initial operator $Z$, reflecting that this convergence holds for both diagonal and off-diagonal operators. 
    The convergence is exponential in system size at all $n_{R_x}$, signaling that a single non-classical gate produces a drastic change in the operator dynamics. The inset in the top panel shows the spacing ratio distribution at $N = 12, n_{H,R_x} = 6$ for $R_x$-doped vs Hadamard-doped circuits, where the former are WD with $\beta = 1$ and the latter follow the distribution $p_{\text{GOE}}$ for a random real observable. The dashed horizontal line is the KL divergence obtained numerically with a comparable sample of GUE matrices.}
    \label{fig:level_spacing_convergence}
\end{figure}

One immediate application of this finding is to the simulation of chaotic operator dynamics. Just a finite number of SG gates in an automaton circuit causes the dynamics to resemble those of random unitary circuits significantly more than just an automaton circuit alone. A finite number of SG gates only increases the state-space simulation complexity by a finite prefactor, using e.g. quantum Monte-Carlo sampling, so this could be a route to computationally tractable simulations of chaos-like operator dynamics in large systems.

\section{\label{sec:conclusion} Conclusion}
This work advances our understanding of operator entanglement dynamics, distinguishing between automaton circuits and fully quantum dynamics through the lens of the operator entanglement spectrum (OES). We began by showing that the OES takes on a universal form in random unitary dynamics, both under Haar random unitary operators and thermalizing under random unitary circuits. We then examined automaton circuits, which have been studied in the past for producing operator evolution similar to that of Haar-random unitary operators, and showed that the OES falls into a different universality class. This clearly establishes that the full OES carries information about quantum dynamics that cannot be deduced from standard measures. We examined some of the properties of this distribution, including the moments, concentration of the measure, and emergence of a large singular component connected to random graph topology. The OES of random automaton circuits may have applications to classical cryptography due to the importance of pseudorandom permutations. Several recent studies have connected quantum properties of permutations such as the entanglement entropy \cite{bakker2024operator, pizzi2022bridging} and properties of quantum chaos such as the vanishing of OTOCs \cite{chamon2022quantum} to classical properties of binary gates. Lastly, we introduce superposition-doped automaton circuits. We find that finitely many Hadamard gates in an automaton circuit are sufficient to drive the OES exponentially close to its thermal behavior, and a vanishing density of $R_x$ gates are sufficient to create level repulsion and drive a transition to Wigner-Dyson statistics. This creates a bridge between classical automaton dynamics and random quantum dynamics, provides a theoretically interesting model in which to study operator dynamics, and suggests a method for efficiently simulating operator dynamics that are more generic than automaton dynamics alone. 

\edit{One might wonder if the appearance of the semicircle distribution signals the emergence of a unitary or orthogonal design in anology with \cite{haferkamp2020quantum}. 
Because the OES is not sensitive to the microscopic distribution of the matrix elements, the semicircle does not in general imply the convergence to a design.
However, it might be interesting to investigate whether superposition-doped automaton circuits provide a way of lifting cryptographically secure classical permutations to pseudo-random unitaries similarly to the PFC ensemble in Ref.~\cite{PFCs} or exponentiating random permutations in Ref.~\cite{Chen_2024}.}

Since the MP distribution is a direct way to quantify delocalization in operator space, we intend to explore further applications to quantum cryptography and quantum complexity theory in future work.

\begin{acknowledgments}
B.T.M.~was supported by the U.S.~Department of Energy Office of Science, Science Undergraduate Laboratory Internships (SULI) program under its contract with Iowa State University, Contract No.~DE-AC02-07CH11358.
This work was supported in part by the National Science Foundation under grants DMR-2143635 and DMR-2611305 (T.I.), DMR-2238895 (J.H.W.), and GCR-2428487 (C.C.).
The code and data used in this work are made available in Ref.~\cite{mcdonough2025automata}.
\end{acknowledgments}

\bibliographystyle{quantum}
\bibliography{refs}  

\newpage
\appendix
\onecolumngrid

\section{Limiting OES in invariant ensembles}
The common thread connecting the ensembles of automaton- and unitary-evolved observables is invariance under conjugation by the compact groups $\SU(d)$, $\SO(d)$, or $S_{d}$ (the permutation group). In this section, we explore the consequences of the invariance of operator ensembles under conjugation by $\SU(d)$ on the limiting OES. First, the MP distribution with parameter $\gamma$ [Eq.~\eqref{eq:MP}] can be understood as the limiting distribution of the eigenvalues of a matrix $\sum_{i=1}^d \ketbra{r_i}{r_i}$, where $\{\ket{r_i}\}_{i=1}^d$ are statistically independent and distributed around zero with $\overline{\braket{r_i}{r_i}} \to \gamma$ as $d \to \infty$. To apply this, consider an operator $M$ and the matrix $\alpha_{ij} \equiv \opbraket{P_i \otimes P_j}{M}$. The inner product of two columns of this matrix can be written as
\begin{align}
\sum_{k}\alpha_{ik}\alpha_{jk}^\ast &= \sum_{k} \opbraket{P_i \otimes P_k}{M}\opbraket{M}{P_j \otimes P_k}=\opbra{P_i}\left( \tr_{\mathcal B(H_B)}\opketbra{M}{M}\right) \opket{P_j}
\end{align}
making it clear that the requirement for the emergence of the MP distribution, $\sum_{k}\alpha_{ik}\alpha_{jk}^\ast \to \frac{\gamma}{\mathcal N}\delta_{ij}$ as $d \to \infty$, also implies that $\tr_{\mathcal B(H_B)}\opketbra{M}{M} \to \frac{\gamma}{\mathcal N}\mathds 1_{\mathcal B(H_B)}$, where $\mathcal N$ is a normalization factor. 
This should be compared to a similar condition for the reduced density matrix of a Haar-random state. 

Now for simplicity, assume $M$ is traceless. 
To compare the entanglement spectrum of ensembles of different dimension, we derive the normalization factor $\mathcal N$ for the ensemble $\{U^\dagger MU\}_U$, where $U$ is Haar-random, to have an MP-distributed entanglement spectrum. 
Since for any nonidentity $P_i, P_j \in \mathcal P_A$ we may choose a unitary $U_{ij}$ such that $U_{ij}^\dagger P_i U_{ij} = P_j$, we get
\begin{align}
\mathbb E_U\qty[\sum_{k} \tr(P_i \otimes P_k U^\dagger M U) \tr(U^\dagger M^\dagger U P_i \otimes P_k)] 
&= \frac{1}{d_A^2}\mathbb E_U\qty[\sum_{k,i} \tr(P_i \otimes P_k U^\dagger M U) \tr(U^\dagger M^\dagger U P_i \times P_k)] \notag \\
&= \tr(M^\dagger M)/d_A^2.
\label{eq:normalization}
\end{align}
This says that under suitable assumptions about the independence of the matrix elements, the matrix $U^\dagger X U d_B / \Vert X \Vert_F$, where $\Vert X \Vert_F = \sqrt{\tr(X^\dagger X)}$, will have an entanglement spectrum distribution given by the MP distribution with $\gamma = \frac{d_B^2}{d_A^2}$. The normalization factor connects the emergence of the universal OES to the independent fluctuation of extensively many Pauli amplitudes. It also implies the extensive growth of the operator entanglement entropy with system size, with the subleading correction given by the entropy of the universal distribution:
\begin{align}
-\sum_{\lambda}\frac{\lambda}{\Vert X \Vert_F^2}\log(\frac{\lambda}{\Vert X \Vert_F^2})
&= -\frac{1}{d_B^2}\sum_{\lambda}\frac{\lambda d_B^2}{\Vert X \Vert_F^2}\log(\frac{\lambda d_B^2}{\Vert X \Vert_F^2})
+ \frac{1}{d_B^2}\sum_{\lambda}\frac{\lambda d_B^2}{\Vert X \Vert_F^2}\log(d_B^2)
\notag\\&\xrightarrow{d\to \infty}S(\text{MP}_\gamma) + \log(d_B^2) \label{eq:oppagecurve}
\end{align}
Since $S(\text{MP}_1) = -1/2$, taking $d_A = d_B$ and $n$ to be the half-system size, the expression for the operator entanglement entropy becomes $n\log 4 - 1/2$, which is the operator Page curve~\cite{page1993average}. As $\gamma \to 0$, we have $S(\text{MP}_\gamma )\to 0$, which recovers the volume-law scaling. 
The emergence of MP statistics seems to be a generic feature of ensembles that are invariant under unitary conjugation, in the sense that if $\{X_d\}_{d\to \infty}$ is a sequence of $d \times d$ matrices, then $\frac{d_B}{\Vert X_d \Vert_F} U_d^\dagger X_dU_d$ concentrates at the MP distribution.
However, we clearly need conditions on $\{X_d\}$ for this to be the case: if $\rho$ is a rank-1 projector, then $\ketbra{\psi}{\psi} = U\rho U^\dagger$ is a \textit{random} rank-1 projector.
The Schmidt values $\{\lambda_i\}$ of $\ket{\psi}$ are MP-distributed, so the OES of $\rho$ is the set of products $\{(\lambda_i\lambda_j)^2\}_{ij}$, whose distribution is written in terms of Bessel functions. 
This suggests that the spectrum of $X$ cannot be too concentrated. 
Another clear counterexample is $X = \mathds 1_{\mathcal H}$, for which $UXU^\dagger$ is always unentangled.
This suggests that the trace of $X$ cannot be too large. 
We conjecture the following conditions:
\begin{conj}
Let $\{X_d\}_{d\to \infty}$ be a sequence of matrices and $\{U_d\}_{d\to \infty}$ a sequence of Haar-random unitary matrices, with $d_B\Vert X_d \Vert_{op}/\Vert X_d \Vert_{F} = \mathcal O(1)$ and $\tr X_d/\Vert X_d \Vert_F \to 0$. Then the OES of $U_d^\dagger X U_d$ concentrates at the MP distribution.
\label{conj:universal_OES_emergence}
\end{conj}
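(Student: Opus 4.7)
The plan is to attack the conjecture via the method of moments applied to the matrix of Pauli coefficients. Let $\alpha_{ij}(U) = \opbraket{P_i\otimes P_j}{U^\dagger X_d U}$, so that the OES is the multiset of squared singular values of $\alpha$, of which there are $d_A^2$. I would compare its empirical spectral distribution to the $\mathrm{MP}_{\gamma}$ law with $\gamma = d_B^2/d_A^2$. By Carleman's condition and the polynomial growth of the MP moments, it suffices to show that for every fixed integer $k\ge 1$,
\begin{equation}
\frac{d_B^{2k}}{\Vert X_d\Vert_F^{2k}}\cdot \frac{1}{d_A^2}\,\tr\bigl(\alpha\alpha^\dagger\bigr)^k \;\xrightarrow{P}\; \int x^k \, \mathrm{MP}_\gamma(x)\,dx,
\end{equation}
where the normalization is the one extracted in Eq.~\eqref{eq:normalization}.

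First I would compute the expectation by expanding $\tr(\alpha\alpha^\dagger)^k$ as a sum over $2k$-tuples of Pauli indices and using the unitary Weingarten formula for $\mathbb{E}_U[U^{\otimes 2k}X_d^{\otimes k}(U^\dagger)^{\otimes 2k}]$. The non-crossing pair partitions contribute the leading $d^{-0}$ piece and reproduce, after accounting for the Pauli completeness relation on $A$ and $B$, the Catalan-weighted moments of $\mathrm{MP}_\gamma$. Here the hypothesis $\tr X_d/\Vert X_d\Vert_F\to 0$ removes the contribution of contractions that ``leak'' onto the identity component of the Pauli basis: such diagrams always carry a power of $\tr X_d$ that is subleading relative to $\Vert X_d\Vert_F^{2k}$ under the assumption. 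The hypothesis $d_B\Vert X_d\Vert_{op}/\Vert X_d\Vert_F=\mathcal O(1)$ controls the remaining non-leading diagrams: each Weingarten contraction produces a trace of a power of $X_d$, which by Hölder is bounded by $\Vert X_d\Vert_{op}^{k-2}\Vert X_d\Vert_F^{2}$, and the operator-norm assumption ensures the resulting powers of $d_B$ never beat the denominator.

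Second, I would establish concentration by invoking Levy's lemma on $\mathrm{U}(d)$: the map $U\mapsto \tfrac{d_B^{2k}}{d_A^2\Vert X_d\Vert_F^{2k}}\tr(\alpha\alpha^\dagger)^k$ is a polynomial in the entries of $U$, and a direct computation of its Frobenius-gradient norm gives a Lipschitz constant proportional to $d_B^{2k}\Vert X_d\Vert_{op}^{2k-1}\Vert X_d\Vert_F /\Vert X_d\Vert_F^{2k}$. Combined with the Gromov--Milman inequality for the Haar measure, which gives Gaussian concentration with variance $\mathcal O(1/d)$, and the bound $d_B\Vert X_d\Vert_{op}/\Vert X_d\Vert_F = \mathcal O(1)$ from the hypothesis, one obtains an exponential tail that kills fluctuations as $d\to\infty$. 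This combined with the moment calculation gives convergence in probability of each moment, hence of the empirical distribution.

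The hardest step will be the bookkeeping of the Weingarten expansion. The Weingarten function has $1/d$ corrections indexed by the geometry of permutations; one must show that the potentially problematic corrections (those involving the Pauli identity on either factor, or those where multiple indices collapse onto the same Pauli) are uniformly controlled by the two hypotheses in the conjecture. In particular, demonstrating that no combination of subleading diagrams conspires to produce a divergent contribution requires a careful pairing between powers of $d_B$ coming from resolutions of identity and powers of $\Vert X_d\Vert_{op}/\Vert X_d\Vert_F$ from the trace factors. I expect that once this accounting is done, the two hypotheses will appear as exactly the minimal conditions needed to suppress the contributions of rank-one-like $X_d$ and trace-dominated $X_d$ that were flagged by the counterexamples preceding the conjecture.
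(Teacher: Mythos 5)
First, a point of reference: the paper does not prove this statement. It is stated explicitly as a conjecture, supported only by the heuristic normalization computation in Eq.~\eqref{eq:normalization}, the free-probability picture of the MP law as a free convolution of column projectors, the two counterexamples (a rank-one projector and the identity) that motivate the hypotheses, and numerics; the text says establishing it rigorously is left for future work. So your proposal cannot be measured against an existing proof, only on its own merits as a program --- and as a program it is sensible (moment method on the realigned matrix $\alpha = X^{t_{AB}}$, Weingarten expansion for the mean, concentration for the fluctuations, with the two hypotheses killing the trace-dominated and rank-one-like contributions exactly as the counterexamples suggest). But it is not a proof: you yourself defer the central step, the uniform control of all subleading Weingarten diagrams, and that bookkeeping is precisely the open content of the conjecture, not a routine verification.

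There is also a concrete flaw in the concentration step as written. Your Lipschitz constant $\propto d_B^{2k}\Vert X_d\Vert_{op}^{2k-1}\Vert X_d\Vert_F/\Vert X_d\Vert_F^{2k}$ implicitly bounds $\Vert\alpha\Vert_{op}$ by $\Vert X_d\Vert_{op}$, but the realignment $X\mapsto X^{t_{AB}}$ does not preserve the operator norm (e.g.\ for a Heisenberg-evolved Pauli, $\Vert X\Vert_{op}=1$ while $\Vert X^{t_{AB}}\Vert_{op}$ is parametrically larger, of order $\Vert X\Vert_F\sqrt{\lambda_+}/d_B$ at best and a priori only bounded by $\Vert X\Vert_F$). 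The honest chain-rule bound gives a gradient norm $\lesssim k\,\Vert\alpha\Vert_{op}^{2k-2}\Vert\alpha\Vert_F\Vert X_d\Vert_{op}$, and with the worst-case $\Vert\alpha\Vert_{op}\le\Vert\alpha\Vert_F=\Vert X_d\Vert_F$ the resulting Gromov--Milman exponent does not diverge for $k\ge 2$ at equal bipartition, so no concentration follows. To rescue this you would need an a priori high-probability bound on the top singular value of the realigned matrix (local concentration on a restricted set), which is itself essentially part of what is being proved; alternatively, compute the variance of each moment by a second Weingarten expansion, in the spirit of the paper's Prop.~\ref{prop:concentration} for the Bernoulli ensemble, which avoids Lipschitz estimates altogether. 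Until the diagrammatic accounting and a correct fluctuation bound are supplied, the statement remains a conjecture.
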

The first condition requires that the eigenvalues of the rescaled matrix do not scale with $d$. 
For the second condition on the trace, since the MP distribution is obtained as a free convolution of the $d_A^2$ projectors onto the columns of $\alpha$ \cite{feier2012methods}, we must only ensure that $\mathcal N\alpha_{00}/d_A = \tr X_d/\Vert X_d \Vert_F \to 0$ almost surely. 
These conditions are supported by numerics, and can be checked directly for the GUE and Haar ensembles, as well as an ensemble of local unitary-evolved operators $X_U = U^{\dagger}(X \otimes \mathds 1_2^{\otimes n-1})U$, where $X$ is a local, traceless observable. 
Establishing or refining this result rigorously is a direction for future research.

\section{OES of classical matrix ensembles}
\subsection{GUE matrices \label{app:GUE}}
In Sec.~\ref{sec:Haar OES} we compare the OES of a chaotically-evolved observable to that of GUE matrices. While the spectrum of a GUE matrix is given by the WD distribution with $\beta = 2$, the OES is demoted to WD with $\beta = 1$. To understand why, we have the following lemma:
\begin{lem}
If $M$ is sampled from the GUE, then $[\alpha_{ij}] \equiv \opbraket{P_i \otimes P_j}{M}$ is a real Wishart matrix.
\end{lem}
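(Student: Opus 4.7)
The plan is to exploit the fact that the tensor products of Pauli operators, once trace-normalized, form an orthonormal basis for Hermitian matrices under the Hilbert--Schmidt inner product, and that this basis is naturally compatible with the bipartition $\mathcal{H}_d = \mathcal{H}_A \otimes \mathcal{H}_B$. Reading the GUE measure in this basis should immediately reveal that the Pauli coefficients are i.i.d.\ real Gaussians, so that $[\alpha_{ij}]$ is a real Ginibre matrix and $\alpha\alpha^T$ has a Wishart distribution in the author's sense.

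First I would write $M = \sum_{I} \alpha_I P_I$, where $P_I$ ranges over an orthonormal set of Hermitian Pauli tensor products and $\alpha_I \in \mathbb{R}$ by Hermiticity of $M$. Because the two-site Pauli basis factorizes as $\{P_i \otimes P_j\}$ with $P_i$ supported on $A$ and $P_j$ on $B$, the multi-index $I$ is simply relabeled by the pair $(i,j)$ with no loss of generality, and $\alpha_{ij} = \opbraket{P_i \otimes P_j}{M}$ by definition.

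Second I would push the GUE measure $d\mu(M) \propto \e^{-c\,\Tr(M^2)}\,dM$ through this change of variables. Since $\Tr(P_I P_J) = \delta_{IJ}$ in the normalized basis, $\Tr(M^2) = \sum_I \alpha_I^2$, so the measure becomes a product of independent real Gaussian measures on the $\alpha_{ij}$'s. The Jacobian of the change of basis is a (Haar-invariant) constant that is absorbed into the normalization, so no scaling gymnastics are required beyond the standard conventions. This is all that is needed: the entries of $[\alpha_{ij}]$ are real i.i.d.\ Gaussian variables, so $\alpha\alpha^T$ is a real Wishart matrix, and correspondingly the OES of $M$—the squared singular values of $\alpha$ after the partial transpose reshaping—follows the Wishart eigenvalue law, whose level spacing distribution is WD with $\beta=1$, as the authors use in Sec.~\ref{sec:Haar OES}.

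The only nontrivial point worth attention is bookkeeping: one must verify that the partial transpose $t_{AB}$ acting on $M$ in the computational basis corresponds exactly to viewing $\alpha_{ij}$ as a $d_A^2 \times d_B^2$ matrix indexed by the $A$-Pauli label on one side and the $B$-Pauli label on the other, rather than inducing any further shuffling or sign change of the real Gaussian entries. Provided the $P_i, P_j$ are chosen Hermitian (so that transposition at most permutes them up to real signs, which preserves the Gaussian distribution), the argument goes through without obstruction; this is really the main ``hard'' step, and it is mild.
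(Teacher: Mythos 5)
Your proof is correct, but it takes a different route from the paper's. The paper constructs the GUE matrix as $M=(G+G^\dagger)/\sqrt{2}$ with $G$ a complex Ginibre matrix, shows the Pauli coefficients $g_j=\tr(P_jG)$ are jointly Gaussian, and verifies $\langle g_i\overline{g_j}\rangle=\delta_{ij}$ to conclude independence, whence the coefficients of $M$ are i.i.d.\ real Gaussians. You instead work directly with the invariant density $\e^{-c\Tr(M^2)}$ on the real vector space of Hermitian matrices and observe that, because the trace-normalized Hermitian Pauli strings form an orthonormal basis for that space under the Hilbert--Schmidt inner product, the quadratic form $\Tr(M^2)=\sum_I\alpha_I^2$ is isotropic in these coordinates and the linear change of variables has constant Jacobian, so the $\alpha_{ij}$ are manifestly i.i.d.\ real Gaussians of equal variance. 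Both arguments are sound; yours is arguably more conceptual (GUE is just a standard isotropic Gaussian on Hermitian matrices, so its coordinates in \emph{any} HS-orthonormal real basis are i.i.d.), while the paper's covariance computation has the advantage of passing through the complex Ginibre matrix $G$, which meshes with the adjacent discussion of truncated Haar unitaries and avoids invoking the explicit form of the GUE density. Your closing remark on the bookkeeping between the Pauli-coefficient reshaping and the partial transpose $t_{AB}$ is a reasonable thing to flag, though it concerns how the lemma is \emph{used} for the OES rather than the lemma itself: the two reshapings differ by fixed unitary changes of basis on the $A$ and $B$ operator spaces, which leave singular values unchanged.
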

\begin{proof}
A GUE matrix $M$ may be sampled by first sampling a matrix $G$ with complex Gaussian entries, then setting $M = \frac{G + G^\dagger}{\sqrt{2}}$. We can first show that the Pauli coefficients of $G$ are i.i.d. complex Gaussians. Let $G_{mn}$ be i.i.d. complex Gaussian variables with variance 1. We write $G = \sum_{j}g_jP_j$, and treat the variables $g_j$ as random variables. Then $g_j = \tr(P_jG) = \sum_{mn}(P_j)_{mn}G_{nm}$. Thus $g_j$ are jointly Gaussian distributed, so we just need to show they are uncorrelated to prove independence:
\begin{equation}
    \langle g_i\overline{g_j}\rangle = \sum_{mnpq}(P_i)_{mn}(P_j)_{qp}\langle G_{nm}\overline{G_{qp}}\rangle = \tr(P_jP_i) = \delta_{ij}
\end{equation}
Since $M$ may be obtained as $M = (G+G^\dagger)/\sqrt{2}$, this shows that the Pauli coefficients of $M$ are independent \textit{real} Gaussian variables, so $[\alpha_{ij}] = \opbraket{P_i \otimes P_j}{M}$ is a real Wishart matrix.
\end{proof}

\subsection{GOE matrices \label{ap:GOE}}
As described in the main text, we need to model the level spacing ratios of random real observables, since $O^tMO$ is real if $O$ is a random superposition-doped automaton and $M$ is a real operator. We model this with the GOE, which is the real part of the GUE. The level spacing ratios clearly do not follow the WD distribution for any $\beta$, but the OES DOS is clearly MP-distributed. This latter fact is explained in the following proposition:
\begin{lem}
If $M$ is sampled from the GOE, then the spectral distribution of $M$ is the same as though $M$ were sampled from the GUE.
\end{lem}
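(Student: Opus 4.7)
The plan is to mirror the structure of the preceding GUE lemma, exploiting the fact that a real symmetric matrix has a restricted Pauli expansion. I would start from the standard construction $M = (H + H^{t})/\sqrt{2}$ with $H$ a matrix of i.i.d.\ real Gaussian entries, and repeat the GUE Pauli--coefficient computation to show that $g_{j} = \opbraket{P_{j}}{M}$ is jointly Gaussian with covariance diagonal in the Pauli index. Because $M$ is real symmetric, these coefficients vanish identically for any Pauli string $P_{j}$ containing an odd number of $Y$ factors (the purely imaginary antisymmetric Paulis), while the coefficients for Paulis with an even number of $Y$'s (the real symmetric Paulis) are i.i.d.\ real Gaussian.

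This vanishing pattern imposes a block structure on the matrix $\alpha_{ij} = \opbraket{P_{i} \otimes P_{j}}{M}$ defined in Eq.~\eqref{eq:alpha}. Partitioning the single-subsystem Pauli indices into ``real'' (even number of $Y$'s) and ``imaginary'' (odd number of $Y$'s) classes, the two off-diagonal blocks vanish because the tensor product of one real-symmetric and one imaginary-antisymmetric Hermitian Pauli is itself imaginary antisymmetric and hence Hilbert--Schmidt-orthogonal to the real symmetric $M$. The two surviving blocks have i.i.d.\ real Gaussian entries, and by the binomial identity $\sum_{k \text{ even}} \binom{n}{k} 3^{n-k} = (4^{n} + 2^{n})/2$ their dimensions are $(d_{A}^{2} \pm d_{A})/2 \times (d_{B}^{2} \pm d_{B})/2$, both of which asymptote to $d_{A}^{2}/2 \times d_{B}^{2}/2$ and preserve the GUE aspect ratio $\gamma = d_{B}^{2}/d_{A}^{2}$.

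Finally I would apply Marchenko--Pastur universality to each block to conclude that each contributes an MP spectrum with the same parameter $\gamma$, and then take the fifty-fifty union over the two blocks. The full GOE OES is then again the MP law, matching the GUE OES, provided the block MP supports coincide with the GUE MP support. The main obstacle is the normalization bookkeeping in this last step: one must carry out a short Gaussian moment calculation showing that the surviving GOE Pauli coefficients carry exactly twice the variance of the GUE coefficients under a common convention for $M$, so that the doubled variance precisely compensates the halved row and column counts of the blocks and yields coincident MP edges. Combined with the exact vanishing of the off-diagonal blocks, this gives the claim.
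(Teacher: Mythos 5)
Your proposal is correct and takes essentially the same route as the paper's proof: split the Pauli basis into even-$Y$ (real symmetric) and odd-$Y$ (imaginary antisymmetric) sectors, observe that the GOE matrix has vanishing coefficients on the latter so that $\alpha_{ij}$ becomes block-diagonal with two independent real Gaussian blocks of asymptotic size $d_A^2/2\times d_B^2/2$ and aspect ratio $\gamma=d_B^2/d_A^2$, and invoke Marchenko--Pastur universality for each block. The factor-of-two variance bookkeeping you flag as the main obstacle is real but harmless --- the paper sidesteps it by defining the GOE as the real part of the GUE and working with the normalization $d_B M/\Vert M\Vert_F$ of Eq.~\eqref{eq:normalization}, under which $\Vert M_\pm\Vert_F\to\Vert M\Vert_F/\sqrt{2}$ gives coincident MP edges automatically.
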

\begin{proof}
Let $M$ be a matrix sampled from the GOE. We have shown above that the GUE is the Ginibre ensemble in Pauli space, and since the GOE is the real part of the GUE, each Pauli coefficient of $M$ corresponding to a complex Pauli operator can be set to zero. The complex structure of the space of operators can be written $\Mat_{d \times d}(\mathbb C) = \mathcal P^{\Re}_d \oplus \mathcal P^{\Im}_d$, where $\mathcal P^{\Re}_d$ are the real Pauli operators, i.e. spanned by Pauli tensor monomials on $\log_2(d)$ qubits having an even number of Pauli-$Y$ factors, and $\mathcal P^{\Im}_d$ similarly denotes operators with an odd number. In terms of this decomposition, $M$ can then be brought into block-diagonal form:
\begin{equation}
M = \mqty(\mathcal P^{\Re}_A \otimes \mathcal P^{\Re}_B & \mathcal P_{A}^{\Re} \otimes \mathcal P_{B}^{\Im} \\ \mathcal P_{A}^{\Im} \otimes \mathcal P_{B}^{\Re} & \mathcal P_{A}^{\Im} \otimes \mathcal P_{B}^{\Im})
\end{equation}
The off-diagonal blocks are complex and the diagonal blocks are real, so the off-diagonal blocks must vanish. The diagonal blocks correspond to two real and independent Gaussian matrices $M_+$ and $M_-$. Since $\mathcal P^{\Re}_d$ spans the space of real symmetric matrices, by comparing dimensions we have $\dim \mathcal P^{\Re}_d = d(d+1)/2$ and $\dim \mathcal P^{\Im}_d = d(d-1)/2$. Let $d_{A/B}^{\pm} \equiv d_{A/B}(d_{A/B}\pm 1)/2$, so that $M_\pm$ is a matrix of dimension $d_A^\pm \times d_B^\pm$. Then the Wishart matrix $M^tM$ factors as $M_+^tM_+ + M_-^tM_-$. In the limit $d\to \infty$, we have $d_{A/B}^\pm \to d_{A/B}^2/2$ and $\Vert M_{\pm}\Vert_F \to \frac{1}{\sqrt{2}}\Vert M \Vert_F$, where $\Vert \cdot \Vert_F$ refers to the Frobenius norm. This shows that a single GOE operator has the same OES statistics as two independent GUE operators of half the size, and the normalization factor expressed in Eq.~\eqref{eq:normalization} is the same for GOE matrices as it is for GUE matrices. 
\end{proof}

With this picture, we can follow the procedure outlined in \cite{atas2013distribution} to derive an approximate expression for the level spacing ratios of a GUE matrix.
\edit{
In a similar spirit to the Wigner surmise, we treat the smallest example that effectively captures the behavior of the larger ensemble;  we compute the joint level spacing distribution of a $2\times 2$ GUE block and a single random Gaussian variable. 
While the true smallest example would be a $4\times 4$ matrix acting on two two-dimensional subsystems, this turns out to be a worse approximation to the true distribution, as we show in Fig.~\ref{fig:GOE_surmise}. One explanation for this is that the real and imaginary block sizes approach each other in the $d \to \infty$ limit.
}
\begin{figure}
\centering
\def\svgwidth{0.45\linewidth}
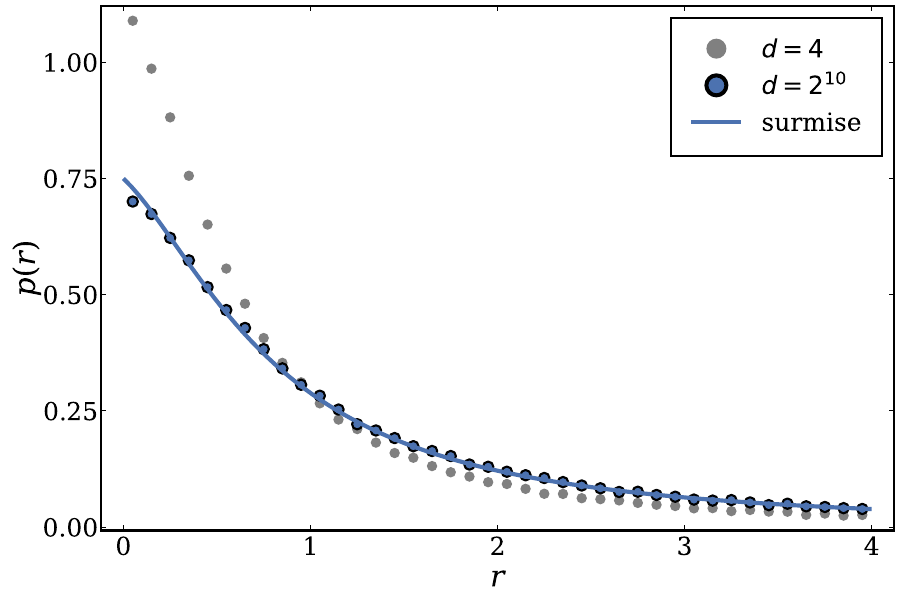
\caption{\edit{Level spacing ratio distribution of $d \times d$ GOE matrices for large $d$ versus $d = 4$ and a comparison to our surmise. Surprisingly, the surmise agrees better with $d = 2^{10}$ than $d = 4$ matrices. The empirical probability densities are obtained by sampling GOE matrices and binning $2^{18}$ entanglement level spacing ratios at both system sizes.}}
\label{fig:GOE_surmise}
\end{figure}

The eigenvalue distribution is given by $p(x,y,z) \propto |x-y|e^{-(x^2+y^2+z^2)/2}$, where $z$ is the eigenvalue originating from the $1\times 1$ Gaussian block. We then compute the level spacing distribution for each possible ordering of the eigenvalues:
\begin{align}
&p_{\text{GOE}}(r) \propto \sum_{x,y,z}\frac{\int dx_1dx_2dx_3 p(x,y,z)\delta(r - \frac{x_1-x_2}{x_2-x_3})}{\int dx_1dx_2dx_3 p(x,y,z)} \ ,
\end{align}
where the sum is taken over possible permutations $x,y,z = \sigma(x_1, x_2, x_3)$, and the bounds of the integral are $x_1 \geq x_2 \geq x_2$. These integrals are simplified by first noting that the GUE eigenvalues are symmetric under exchange. Additionally, following Ref.~\cite{atas2013distribution}, the numerator can easily be determined up to a constant using a clever u-substitution. Putting this together, we find
\begin{equation}
p_{\text{GOE}}(r) = \frac{3}{4}\frac{1+r}{(1+r+r^2)^{3/2}}
\label{eq:pgoe2}
\end{equation}
This distribution emerges in the context of superposition-doped automaton circuits. To illustrate this, we evolve an $R_x(\phi) \equiv e^{-i\frac{\pi}{2}X}$ gate with random Hadamard-doped automaton operators. As apparent in Fig.~\ref{fig:spacingstats}, when $\phi$ is a multiple of $\pi/4$, the spacing statistics track Eq.~\ref{eq:pgoe2}, otherwise the statistics closely follow the WD distribution with $\beta = 1$. The data here are obtained by sampling $128$ matrices on 10 sites
\begin{figure}
    \centering
    \def\svgwidth{0.45\linewidth}
    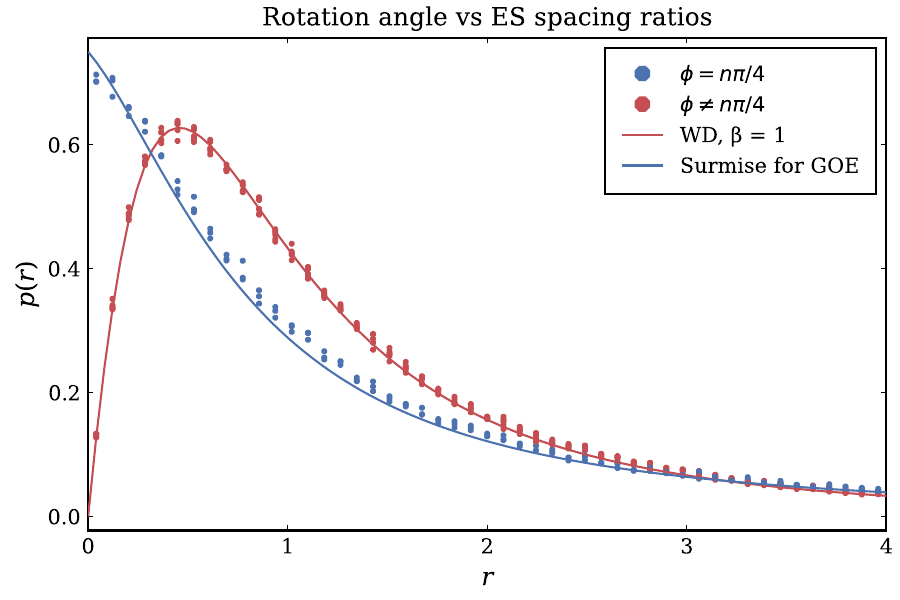
    \caption{The level spacings of $R_x(\phi)$ evolved with a random automaton on ten sites with $n_H = 4$ (chosen randomly) show that although MP is reached for each choice of $\phi$, the level spacings indicate whether or not the evolved operator behaves as a GOE operator with a block-diagonal structure in Pauli-string space, or as a GUE operator, whose level repulsion indicates ergodicity in Pauli-string space.}
    \label{fig:spacingstats}
\end{figure}

\section{Automaton OES}
We have conjectured based on numerical evidence that the automaton OES distribution is the singular-value distribution of Bernoulli matrices. In this section we explore some of the properties of this distribution, including the moments, concentration of the measure, the average Schmidt rank, and the relationship to the topology of random graphs.

\subsection{Bernoulli singular value distribution \label{app:moments}}
In this section, we derive an expression that relates the moments of the average singular value distribution of Bernoulli matrices to a partition-counting problem, which we can then solve computationally to obtain the first few moments of this distribution. In particular, this distribution is uniquely determined by its moments. We show that the moments of the singular value distribution of a single random Bernoulli matrix concentrate at the average moments, providing evidence for our hypothesis that the automaton OES moments also concentrate.

\begin{prop}
Let $P(k)$ denote the set of partitions of $\{1, \dots, k\}$. Let $S_k$ be the set of tuples constructed $S_k \equiv \{(1,1), (2,1), (2,2), (3,2) \dots (k,k), (1, k) \}$. Each pair of partitions $p_i \in  P(k)$ and $p_j \in P(k)$ induces a partition $p_{ij}$ of $S_k$ via the following rule; if $i_1, i_2$ are grouped in $p_i$ and $j_1, j_2$ are grouped in $p_j$, then $(i_1, j_1)$ is grouped with $(i_2, j_2)$ in $p_{ij}$. The $k^{\text{th}}$ moment of the Bernoulli spectral distribution are given by the number of such partitions with $|p_i| + |p_j| - |p_{ij}| = 1$, where $|p|$ denotes the number of groups in the partition $p$.
\end{prop}
\begin{proof}
Let $B$ be a $d\times d$ matrix whose entries are distributed according to $\text{Bern}(\frac{1}{d})$. Then the $k^\text{th}$ moment of the singular value distributions averaged over realizations of $B$ is given by 
\begin{equation}
\mathbb E_B\qty[\frac{1}{d}\tr((BB^\dagger)^k)] = \frac{1}{d}\sum_{\vec i, \vec j \in \{1 ... d\}^{\otimes k}}\mathbb E_B \qty[B_{i_1j_1}B_{i_2j_1}B_{i_2j_2}\dots B_{i_kj_k}B_{i_1j_k}]
\end{equation}
Since $B_{ij}$ takes values $0,1$, we have
\begin{equation}
\mathbb E[B_{i_1j_1}B_{i_2j_1}B_{i_2j_2}\dots B_{i_kj_k}B_{i_1j_k}] = \mathbb E[ B_{m_1n_1}^{ t_1}B_{m_2n_2}^{t_2}\dots B_{m_ln_l}^{t_l}] = \frac{1}{d^l} \ .
\end{equation}
We have grouped the terms in the summand by multiplicity by partitioning the indices $\{(i_t, j_t), (i_{t+1}, j_t)\}_{t=1}^k$ into $l$ parts of size $t_1, \dots, t_l$, where $i_{k+1} \equiv i_1$. Clearly, only partitions with $l$ parts which appear in the sum $d^{l+1}$ times will contribute to the sum as $d \to \infty$. To count the number of times each partition appears, we reindex the sum by a set partition $p_i$ of the indices $\{i_t\}$ and $p_j$ of the indices $\{j_t\}$. The partitions $p_i, p_j$ induce a set partition $p_{ij}$ of $\{(i_1, j_1),(i_{1+1},j_t)\}_{t=1}^{k}$.
Each part of $p_i$ can take on $d-|p_i|-1 \approx d$ values, so the pair $p_i, p_j$ occur in the sum $\approx d^{|p_i|+|p_j|}$ times, where $|p_i|$ refers to the number of parts in $p_i$. The expectation value is $d^{-|p_{ij}|}$, therefore the problem is reduced to counting the set partitions $p_i, p_j$ such that $|p_i|+|p_j|-|p_{ij}| = 1$, where $\vert \cdot \vert$ denotes the number of subsets in the partition. Counting these computationally gives the first few moments: $1, 3, 12, 57, 303, 1747, 10727, 69331$.
\end{proof}

\begin{prop}
The moments of the Bernoulli spectral distribution concentrate in the large-dimension limit, in the sense that
\begin{equation}
\mathbb P_B\qty[\qty|\tr((BB^t)^k) - m_k| > \epsilon] \to 0 \text{ as } d \to \infty
\end{equation}
for any $\epsilon > 0$.
\label{prop:concentration}
\end{prop}
\begin{proof}
We can write the variance explicitly:
\begin{align}
\Var(m_k) &= \frac{1}{d^2}\mathbb E_B\qty[\tr((BB^t)^k)^2] - \frac{1}{d^2}\mathbb E_B\qty[\tr((BB^t)^k)]^2 \notag\\
&= \frac{1}{d^2}\sum_{\vec i, \vec j, \vec k \vec l}\mathbb E_B\qty[\prod_{m}B_{i_mj_m}B_{i_{m+1}j_m}\prod_{q}B_{m_qn_q}B_{m_{q+1}n_q}] \notag\\
&\qquad- \frac{1}{d^2}\sum_{\vec i, \vec j, \vec k \vec l}\mathbb E_B\qty[\prod_{m}B_{i_mj_m}B_{i_{m+1}j_m}]\mathbb E_B\qty[\prod_{q}B_{m_qn_q}B_{m_{q+1}n_q}]
\end{align}
We can then count multiplicities as in the computation of the moments. Let $p_{i\oplus m}$ denote partitions of $2k$ indices $\{i_1, \dots, i_k, m_{1}, \dots, m_k\}$ and $p_{j\oplus n}$ be defined similarly with respect to $\{j_1, \dots, j_k, n_1, \dots, n_k\}$. Let $p_{i\oplus m,j\oplus n}$ be the corresponding partition induced on $\{(i_t, j_t), (i_{t+1}, j_t)\} \cup \{(m_t, n_t), (m_{t+1}, n_t)\}$ according to the rule that $(i_{t_1}, j_{p_1})$ and $(i_{t_2}, j_{p_2})$ are grouped iff $i_{t_1}$ and $i_{t_2}$ are grouped in $p_{i\oplus m}$ and $(i_{t_1}, j_{p_1})$ is grouped with $(m_{t_2}, n_{p_2})$ iff $i_{t_1}$ is grouped with $m_{t_2}$ and $j_{p_1}$ is grouped with $n_{p_2}$ in $p_{j \oplus n}$. By the same argument as the computation of the moments, the first term counts the number of partitions $p_{i\oplus m}$, $p_{j\oplus n}$ such that $|p_{i\oplus m}|+|p_{j \oplus n}| - |p_{i\oplus m, j\oplus n}| = 2$. 

The second term is determined by the same rule as before; we count the tuples of partitions $p_i, p_j, p_m,p_n$ such that both $|p_i| + |p_j| - |p_{ij}| = 1$ and $|p_m| + |p_n| - |p_{mn}| = 1$. Our objective is to show that there is a bijection between these two types of partitions, which will prove the claim.

First, suppose that we have a tuple of partitions $p_i, p_j, p_m,p_n$ such that both $|p_i| + |p_j| - |p_{ij}| =1$ and $|p_m| + |p_n| - |p_{mn}| = 1$. Then clearly we induce a partition $p_{i \oplus m}$ and $p_{j \oplus n}$ by taking the disjoint union of these two partitions, and because of this disjointness, we clearly have $|p_{i\oplus m}| + |p_{j \oplus n}| - |p_{i \oplus m, j\oplus n}| = 2$. Therefore we have an injection from the partitions that contribute to the second sum into the partitions that contribution to the first sum.

On the other hand, we will argue that partitions of this type, i.e. in the image of the mapping above, are the only ones that contribute to the first sum. To see this, notice that any partition $p_{i\oplus m}$ may be obtained by taking the disjoint union of partitions $p_i$ and $p_j$ and grouping together their subsets. We may similarly obtain any $p_{j \oplus n}$ from a $p_j$ and $p_n$. However, we notice that in order to create a group $p_{i\oplus m, j \oplus n}$, we must create groups in $p_{i\oplus m}$ \textit{and} in $p_{j\oplus n}$. The consequence of this is that the grouping operation necessarily strictly decreases $|p_{i \oplus m}| + |p_{j \oplus n}|-|p_{i\oplus m, j \oplus n}|$ from the value obtained via the disjoint union. This shows that the above mapping is surjective, and so we have shown the two are in bijective correspondence. Then the Markov equality shows that the vanishing of the variance implies concentration of the moments.
\end{proof}

Because a Bernoulli matrix can be viewed as the adjacency matrix of a directed graph, the relationship between the OES of random permutations and that of Bernoulli matrices has surprising implications due to known results about the topology of random graphs. The Bernoulli ensemble with $p = 1/d$ can be viewed as the ensemble of random directed graphs on $d$ vertices with a probability $p = 1/d$ of any two vertices being connected by an edge. These graphs, known as Erd\H{o}s-R\'enyi graphs, have been studied previously in mathematics, and the regime $pd = c$ where $c$ is a fixed constant is particularly interesting. Erdos and Renyi observed in \cite{erd6s1960evolution} that there is a striking change in the connectivity of random graphs as the number of edges increases, similar to a percolation transition in classical physics. Let $M = P^{t_{AB}}$ where $P$ is a random permutation. If the graph determined by $M$ is truly random, then the weak connectivity of this graph, defined as the connectivity of the corresponding undirected graph, can be predicted combinatorially. In particular, in the regime $pd = 2$ corresponding to the expected degree of $M+M^t$, there is a single connected component of size $\mathcal O(d)$ \cite{alon2015probabilistic}. The sizes of the smaller components are predicted by a Poisson branching process, $P(|C(v)| = k) = \e^{-2k}2^{k-1}/k!$, where $C(v)$ is the component containing vertex $v$. We examine the graph topology of random permutation matrices numerically, and find excellent agreement with this prediction, as shown in Fig.~\ref{fig:cmp_sizes}. 

\begin{figure}[hbt!]
    \centering
    \def\svgwidth{0.95\linewidth}
    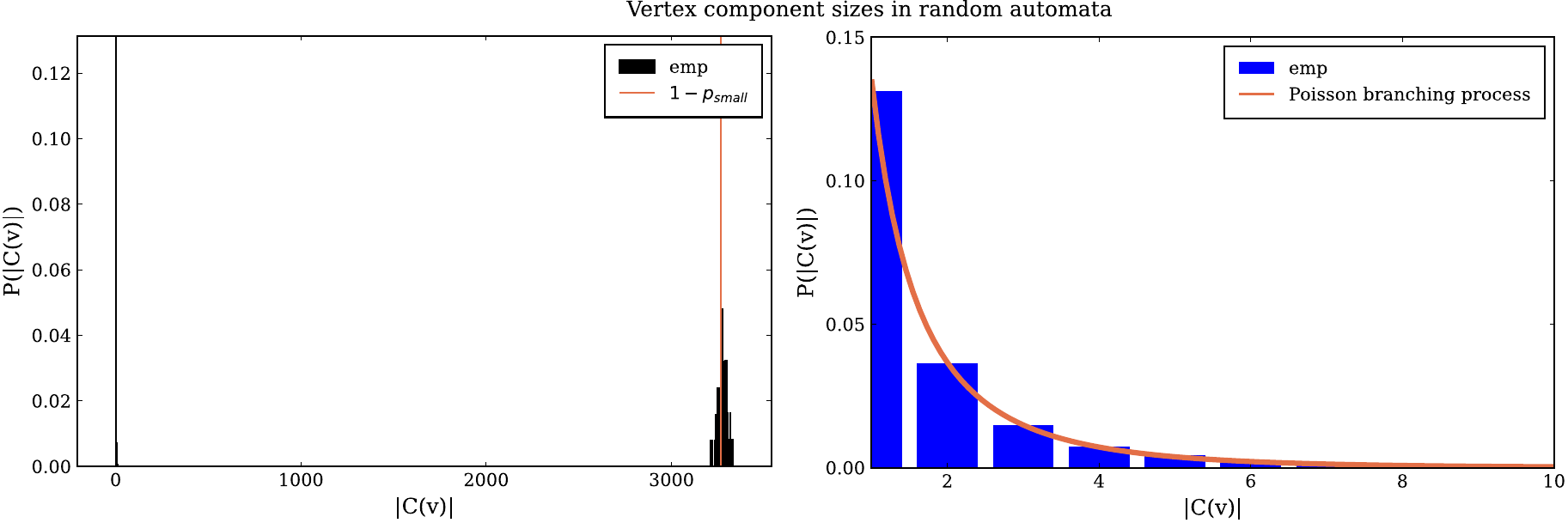
    \caption{Distribution of size of component $C(v)$ containing a vertex $v$. The graphs used are obtained from the adjacency matrix of $P^{t_b}$, where $P \in S_{2^{12}}$ is drawn uniformly at random. The component sizes are averaged over all vertices in each graph, with 100 permutation instances shown. We have $1-p_{small}$ given by the smallest solution to the equation $x = \e^{-2(1-x)}$ and so the largest component has size $\approx 0.80 N$, which is consistent with the random permutations. There is some variation in this number due to the finite size of the permutations. For a random graph, the unique large component would have exactly $(1-p_{small})N$ components. The distribution of small components shown on the right is also consistent with the Poisson branching process.}
    \label{fig:cmp_sizes}
\end{figure}

Because of the non-trivial and size-independent distribution of small components, the entanglement spectrum of a random permutation clearly has a singular component, with atoms at the singular values of all finite components of the graph of all sizes which are low-degree algebraic integers. The small components can be understood as converging to Galton-Watson trees with a Poisson degree distribution of parameter 2 with randomly directed edges \cite{alon2015probabilistic}. However, we also observe that the big component of the graph has a nontrivial singular component as well. We believe this is due to the localization of the eigenvectors of the corresponding Wishart matrix $M M^t$ on subtrees of the graph, which we find numerically for small system sizes. In the study of the spectra of random graphs, the regime of fixed $pd$ is extremely delicate, and obtaining spectral features of these graphs such as the average rank, moments, concentration to a limiting measure, fluctuations, localization of eigenvectors, and continuity properties of the spectrum is an active area of research \cite{guionnet2021bernoulli, bordenave2010rank, bordenave2017mean, salez2015every}. Furthermore, large permutations correspond to random \textit{directed} graphs with average degree $n$, and it is an interesting question whether the singular value distribution of these graphs may be mapped to the spectral distributions of some random undirected graph. Ref.~\cite{salez2015every} has shown that the support of the singular component of the spectral distribution of such a random undirected graph is all of $\mathbb R$, which makes numerical estimation difficult. However, by plotting the weight of eigenvectors corresponding to different eigenspaces, we see that at least for small system sizes, the estimated atoms correspond exclusively to eigenvectors supported on local trees. It seems possible to obtain an exact formula for the moments of the random permutation OES using the generalized Isserlis's formula, counting walks on graphs following \cite{hainzl2024tree}, or by using Weingarten calculus to integrate over the permutation group \cite{collins2022weingarten}.

\subsection{Estimating the Schmidt rank of automata\label{sec:kernelprob}}
The correspondence between random automata and random Bernoulli matrices is a useful computational tool. To derive an asymptotic lower bound on the kernel probability of a random $d \times d$ Bernoulli matrix $B$ with entries $B_{ij} \sim \text{Bern}(1/d)$, we can find the average number of unique nonzero columns in such a matrix. Set $Z_i = \mathds 1\qty[\{\col(i) \neq \col(j), j < i\} \cap \{\col(i) \neq \vec 0\}]$, denoting an indicator function. We find
\begin{align}
    \langle Z_i \rangle &= \sum_{l=1}^{d}\mathbb P\qty(\bigcap_{j < i}\{\col(i) \neq \col(j), j < i\} | \wt(i) = l\})\mathbb P\qty(\{\wt(i) = l\})  \notag\\
    &= \sum_{l=1}^{d}\binom{d}{l}\frac{1}{d^l}\qty(1-\frac{1}{d})^{n-1}\qty(1-\frac{1}{d^l}\qty(1-\frac{1}{d})^{n-l})^{i-1}
\end{align}
Then the average Schmidt rank is given by
\begin{align}
\sum_{i=1}^d \langle Z_i \rangle &= \sum_{l=1}^{d}\binom{d}{l}\frac{1}{d^l}\qty(1-\frac{1}{d})^{d-l}\sum_{i=0}^{d-1}\qty(1-\frac{1}{d^{l}}\qty(1-\frac{1}{d})^{d-l})^{i} \notag \\
&= \sum_{l=1}^{d}\binom{d}{l}\frac{1}{d^l}\qty(1-\frac{1}{d})^{d-l}\frac{1-\qty(1-\frac{1}{d^l}(1-\frac{1}{d})^{d-l})^d}{\frac{1}{d^l}(1-\frac{1}{d})^{d-l}}\notag\\
 &= \sum_{l=1}^{d}\binom{d}{l}\qty(1-\qty(1-\frac{1}{d^l}(1-\frac{1}{d})^{d-l})^d)
\end{align}
We notice that for $l \ll d$,
\begin{align}
\qty(1-(1-\frac{1}{d^l}(1-\frac{1}{d})^{d-l})^{d}) &\to 
\begin{cases}
1-\e^{-\frac{1}{\e}} & l = 1 \\
\frac{d}{\e d^l} & l > 1
\end{cases}
\end{align}
Plugging this in, we find
\begin{align}
 \sum_{l=1}^{d}\binom{d}{l}\qty(1-(1-\frac{1}{d^l}(1-\frac{1}{d})^{d-l})^d) & \approx  d(1-\e^{-\frac{1}{\e}})+\frac{d}{\e}\sum_{l=2}^{d}\binom{d}{l}\frac{1}{d^l} \notag\\
&= d(1-\e^{-\frac{1}{\e}})+\frac{d}{\e}\qty((1+\frac{1}{d})^d - 2) \notag\\
&\to d(2-\frac{2}{\e} - \e^{-\frac{1}{\e}})
\end{align}
And the corresponding lower bound for the average kernel probability 
\begin{align}
p(\{0\}) \approx \frac{2}{\e} + \e^{-\frac{1}{\e}} - 1 \approx 0.427
\end{align}

\subsection{Generic off-diagonal observables\label{app:generic}}
In the main text, we define the related ensemble $\{B_{\theta}\}_{B, \theta}$ by $(B_{\theta})_{mn} =[ B_{mn}\e^{\ii\theta_{mn}}]$ where $\theta_{mn}$ are i.i.d. around the circle and $B$ is a Bernoulli matrix. This is relevant for analyzing the OES of automaton-evolved off-diagonal Hermitian operators. In addition, automaton circuits $P$ are typically defined with phases by $P\ket{n} = \e^{\ii\theta_n}\ket{P(n)}$. We want to show that the universal spectrum found for the Bernoulli ensemble also holds for $\{B_{\theta}\}_{B, \theta}$. Since the method of moments is applicable, we just need to show that the moments of the distribution match.
\begin{prop}
The moments of the average singular value distribution of $\{B_{\theta}\}_{B, \theta}$ are the same as those of the Bernoulli ensemble $\{B\}_B$.
\end{prop}
\begin{proof}
We begin by re-writing the moments as
\begin{equation}
m_k = \mathbb E_B\qty[\frac{1}{d}\Tr\{(B^\dagger B)^k\}] = \mathbb E_B \bra{0} (B^\dagger B)^k \ket{0}
\end{equation}
Since the ensemble $B_{\theta}$ is invariant under permutations when the angles $\theta$ are i.i.d, we can replace the trace with the expectation value of a single state, and we choose $\ket{0}$ arbitrarily. Then we see that for a given realization of $B$, the term $\bra{0} (B^\dagger B)^k \ket{0}$ counts the number of walks on a directed graph of $2k$ steps which begin at $\ket{0}$, then alternate walking forward along the edges of $B$ and backward (because $B^\dagger$ has the edges of $B$ reversed) and ending back at $\ket{0}$. In the walk, each edge traversed is weighted by the corresponding roots of unity $\e^{\ii\theta_n}$ if the step is taken forward and $\e^{-\ii\theta n}$ if taken backward. In the $d \to \infty$ limit, almost all realizations of $B$ are locally treelike at $\ket{0}$, meaning that the subgraph starting at $\ket{0}$ forms a tree. Thus, for every edge which is traversed in the forward direction, picking up a factor of $\e^{\ii\theta_n}$, this edge must also be traversed backward, picking up a factor of $\e^{-\ii\theta_n}$. These factors cancel, and so $m_k$ is the same as for the Bernoulli matrix ensemble.
\end{proof}

\section{Operator dependence \label{app:operator_dependence}}
Like permutations, superposition-doped automaton operators have a preferential basis. However, as the number of Hadamard gates $n_H$ increases, the OES becomes generic regardless of the initial operator. The difference is in a slower convergence when the initial operator is diagonal, as shown in Fig.~\ref{fig:DOS_convergence}.
This could be explained heuristically through the localization in Pauli-string space; the diagonal blocks dominate the KL divergence, and each inserted Hadamard gate creates larger off-diagonal blocks. To formalize this intuition, 
consider $Z(\tau) = \sum_{\vec b}\alpha_{\vec b}Z^{\vec b}$, where the sum runs over binary vectors, $Z^{\vec b} \equiv \bigotimes_{i=1}^{2^n}Z^{b_i}$, and $\alpha_{\vec b}$ are real Gaussian amplitudes. Fix the $n_H$ Hadamard gates on sites labeled by a binary vector $\vec a$. There are $2^{n-n_H}$ choices of $\vec b$ such that $\vec b \cdot \vec a = 0$, so $Z(\tau)$ may be separated into a diagonal component and an off-diagonal component where conjugation by a Hadamard gate turns a Pauli-$Z$ into a Pauli-$X$: 
\begin{equation}
Z(\tau) = \sum_{\vec b \cdot \vec a = 0}^{2^{n-n_H}}\alpha_{\vec b}PZ^{\overline{\vec b \wedge \vec a}}P^t + \sum_{\vec b \cdot \vec a \neq 0}^{2^n(1-2^{-n_H})}\alpha_{\vec b}PZ^{\overline{\vec b \wedge \vec a}}X^{\vec b \wedge \vec a}P^t
\end{equation}
where $\wedge$ denotes the element-wise product (AND) and $\overline{\vec b \wedge \vec a}$ denotes the negation of the components of $\vec b \wedge \vec a$. The first term in the above operator is diagonal, and restricted to the subspace spanned by diagonal Pauli operators. As a crude approximation, we can treat the spectral distribution $p$ as a convex combination $p = ap_X + bp_Z$, where $p_X$ is the spectrum of the first term, $p_Z$ is the spectrum of the second term, $a = 1-2^{-n_H}$, and $b = 2^{-n_H}$. As we observe in Fig.~\ref{fig:DOS_convergence}, $p_X$ quickly approaches the MP distribution, so we approximate $p_X \approx \text{MP}$. Expanding $D_{\text{KL}}(\text{MP} \ \Vert \ \mu)$ to first order in $b$ then gives 
\begin{equation}
D_{\text{KL}}(\text{MP} \ \Vert \ \mu) \approx -\frac{b}{a}-\ln(a) \approx -\frac{b}{a}-b \longrightarrow -2b = 2^{-n_H+1}
\end{equation}
From this argument, we expect $D_{\text{KL}}(\text{MP} \ \Vert \ \mu) \sim 2^{-n_H}$. This is in good agreement with the observed behavior in the lower panel of Fig.~\ref{fig:DOS_convergence}.

\end{document}